\documentclass[review,3p,10pt]{elsarticle}

\usepackage{graphicx,subfigure} 

\usepackage{amssymb}
\usepackage{amsthm}
\usepackage{amsmath}
\usepackage{array}
\usepackage{physics}
\usepackage{float}				
\usepackage{morefloats}
\usepackage{booktabs}
\usepackage{multirow}
\usepackage[linesnumbered,ruled]{algorithm2e}
\usepackage{enumitem}
\usepackage{arydshln}

\usepackage{mathtools}
\usepackage{stmaryrd}

\biboptions{sort&compress}

\mathchardef\Re="023C
\mathchardef\Im="023D

\usepackage{color}

        \newcommand{\CC}{{\mathcal C}}
        \newcommand{\DD}{{\mathcal D}}
        \newcommand{\FF}{{\mathcal F}}
        \newcommand{\AAA}{{\mathcal A}}
        \newcommand{\BBB}{{\mathcal B}}
        \newcommand{\HH}{{\mathcal H}}

        \newcommand{\KK}{{\mathcal K}}
        
        \newcommand{\MM}{{\mathcal M}}
        \newcommand{\OO}{{\mathcal O}}
        
        \newcommand{\XX}{{\mathcal X}}
        \newcommand{\YY}{{\mathcal Y}}
        
        \newcommand{\VV}{{\mathcal V}}

        \newcommand{\RRR}{{\mathcal R}}

				\newcommand{\CCC}{\ensuremath{\mathbb{C}}}
				\newcommand{\RR}{\ensuremath{\mathbb{R}}}
				\newcommand{\ZZZ}{\ensuremath{\mathbb{Z}}}

        \newcommand{\E}{{\rm E}}

        \newcommand{\diag}{{\rm diag}}

				\newcommand{\beq}{\begin{equation}}
				\newcommand{\enq}{\end{equation}}

        \newcommand{\beqa}{\begin{eqnarray}}
        \newcommand{\enqa}{\end{eqnarray}}
        \newcommand{\beqas}{\begin{eqnarray*}}
        \newcommand{\enqas}{\end{eqnarray*}}
        \newcommand{\bea}{\begin{array}}
        \newcommand{\ena}{\end{array}}
        \newcommand{\vect}[1]{\begin{bmatrix}#1\end{bmatrix}}
        \newcommand{\eqdef}{\stackrel{\rm def}{=}}

				\newcommand{\est}[1]{\hat#1}		

                \renewcommand{\H}{{\rm H}}  
                \newcommand{\vecc}{{\rm{vec}}}

\newtheorem{definition}{Definition}
\newtheorem{lemma}[definition]{Lemma}

\newtheorem{proposition}[definition]{Proposition}

\newtheorem{remark}[definition]{Remark}

\def\QEDopen{{\setlength{\fboxsep}{0pt}\setlength{\fboxrule}{0.2pt}\fbox{\rule[0pt]{0pt}{1.3ex}\rule[0pt]{1.3ex}{0pt}}}}
\def\QED{\QEDopen} 

\def\proof{\noindent\hspace{2em}{\textbf{\itshape Proof:} }}
\def\endproof{\hspace*{\fill}~\QED\newline}

\usepackage{stackengine}

\DeclareFontFamily{U}{mathx}{\hyphenchar\font45}
\DeclareFontShape{U}{mathx}{m}{n}{<-> mathx10}{}
\DeclareSymbolFont{mathx}{U}{mathx}{m}{n}
\DeclareMathAccent{\widebar}{0}{mathx}{"73}

\journal{Mechanical Systems and Signal Processing}

\begin{document}

\begin{frontmatter}

\title{Fault detection on manifolds of nonlinear dynamical systems with dual autoencoders}

\author{Bulut Ku{\c{s}}konmaz }
\author{Szymon Gre{\'s}\corref{cor1}}
\author{Rafa{\l} Wi{\'s}niewski}

\cortext[cor1]{Corresponding author; {\it E-mail address: sgres@es.aau.dk }}

\address{Aalborg University, Department of Electronic Systems, Fredrik Bajers Vej 7C, 9220 Aalborg, Denmark}

\begin{abstract}
Autoencoders are commonly used for unsupervised data-driven fault detection in nonlinear dynamical systems. Despite their widespread success and often favorable performance compared with traditional approaches, most applications rely on heuristic reconstruction of measured data using features learned from nominal training data, without explicit insight into the underlying nonlinear dynamics. This lack of interpretability limits the extension of autoencoder-based fault detection methods to higher levels of fault diagnosis, e.g., fault localization and quantification, and confines their use largely to application-oriented studies. To address this limitation, we propose a strategy for detecting parametric faults in nonlinear stochastic mechanical systems. A mathematical representation of the output data is developed using Koopman operator theory, which motivates their embedding on a manifold and its subsequent approximation with a two-stage autoencoder. Fault detection is formulated within a hypothesis-testing framework, in which new data are tested for consistency with a neighborhood of the manifold identified from nominal observations. The proposed method is validated through Monte Carlo simulations of a toy mechanical system with two types of nonlinearity and applied to two well-known real benchmarks, where it provides favorable fault-detection performance compared with standard autoencoders.

\end{abstract}

\begin{keyword}
Subspace-based fault detection \sep
Autoencoders \sep
Koopman operator \sep
Structural health monitoring 
\end{keyword}

\end{frontmatter}


\section{Introduction}\label{sec:intro}

Nonlinear phenomena are ubiquitous in the vibration response of modern structural systems, where they may arise from contact, friction, geometric effects, material behaviour, gravity, damage-induced mechanisms, among other phenomena \cite{worden2019nonlinearity}. 
Methods for identification of nonlinear dynamics have recently gained popularity due to the massive advances in machine learning to perform nonlinear regression, also coupled with the boost in the computational performance of the current hardware; for a survey on methodology see \cite{Schoukens_nonliearIDsurvey}. Closely related to system identification is the field of fault detection, where system features identified from the process data and in some cases the prior physical knowledge, are used to define residuals and make decisions on the presence or absence of faults in a monitored system; for a detailed review on vibration-based fault detection methods see \cite{AVCI2021107077}. 
Accordingly, a common taxonomy of fault-detection approaches distinguishes between data-driven and physics-based methods.

Physics-based fault-detection methods can be further categorized based on the amount of physical information used to aid in the fault diagnostics, ranging from first-principles models to physics-enhanced data-driven methods; see, for example, surveys in \cite{Haywood_2024,Cicirello_2024}. Within this class, nonlinear state observers are commonly used to detect faults, e.g., by applying statistical tests to innovation or residual sequences \cite{Kadirkamanathan2002ParticleFilteringFD}, evaluating likelihoods or generalized likelihood ratios computed from particle-filter approximations \cite{Li2001ParticleFilteringLikelihoodRatio,Zhang2005BootstrapParticleFiltersFDI}, parametrizing stable observer-based residual generators with thresholding schemes \cite{Yang2016ParametrizationNonlinearObserverFD}, or using multiple filters associated with different fault hypotheses \cite{Tadic2014ParticleFilteringSensorFaultDiagnosis}.
While state observers offer a well-developed change detection framework, they require sufficiently accurate nominal model, including its structure, parameters, and uncertainty description, among others, which presents a significant engineering cost and is not always available in practice.
Alternatively, fault-related parameters of nonlinear physics-based models can be estimated jointly with the system states, which constitutes fault detection, localization and quantification in one shot, and is a part of nonlinear model updating field. Therein, the dedicated methods use, e.g.,  bifurcation curves \cite{MELOT2026113589}, hierarchical Bayesian model updating \cite{JIA2022114646} or rely on nonlinear Bayesian filtering schemes, e.g., extended Kalman filters \cite{Ebrahimian2015EKFMaterialParameterEstimation}, unscented Kalman filters \cite{Chatzi2009UKFPFNonlinearStructuralID,ASTROZA2019782,IMPRAIMAKIS2022108026}, marginalized filters with disturbance rejection \cite{ASWAL2025113239,KUNCHAM2023110269}, among many other approaches; reader interested in detailed description of Bayesian filers and their performance comparison in parameter estimation is referred to \cite{Tatsis_2022,ASTROZA2019520}.
These approaches often alleviate the need for a fully calibrated white-box model by inferring selected uncertain, or fault-related quantities from data, which however, is obtained at the cost of an increased computational burden, especially when many parameters must be estimated jointly with the states.
A relatively new class of physics-based methods are physics-informed neural networks (PINNs), in which governing equations and structural constraints are incorporated into the learning process \cite{RAISSI2019686,Karniadakis2021PhysicsInformedML}. PINNs have been used directly for fault detection and condition monitoring, for example by encoding prior fault characteristic of the model in a learning loss \cite{Shen2021PhysicsInformedBearingFD}, combining physics-informed autoencoders with sequential probability ratio testing for novelty detection \cite{Lai2024PIDAE}, and developing uncertainty-aware diagnostic networks \cite{Xu2024PIPDN,Herwig2025BearingNet}. While PINNs offer an interpretable framework for fault detection, their accuracy depends on the correctness of the assumed fault parameterization and standard formulations can incur substantial training costs.

Within the machine-learning community, fault detection is often formulated as a one-class classification problem with only nominal data available \cite{farrar2012structural}. Early work on fault detection in nonlinear mechanical systems used support vector machines to construct damage-sensitive decision boundaries from output data \cite{BORNN2010909}. Their performance, however, depends strongly on the choice of kernel and its hyperparameters, for which no generally optimal selection exists.
Another line of work is based on approximation of deterministic nonlinear systems by Koopman operator, where deep networks are used to learn the lifting coordinates and subsequently to generate residuals for fault detection and isolation \cite{BAKHTIARIDOUST2023200,Nader_deepKoopman}. 
The Koopman framework enables linear residual generation and decision making tools to be applied in a lifted space, but existing methods have largely considered deterministic Koopman approximations, whereas dynamical systems in practice are affected by stochastic
inputs and measurement noise. Other deep learning methods and in particular autoencoders (AEs) \cite{hinton2006reducing}, learn instead a nominal latent representation directly from fault-free operating data and detect faults through deviations in reconstruction, prediction, or orthogonal residuals \cite{Simpson_modelreductionnonlinearAEs,CACCIARELLI2022107853,WANG2024105804}. More recent deep learning architectures extend this principle by learning nonlinear fault-sensitive features directly from vibration responses under random excitation \cite{Joseph2024DeepLearningArchitectures}. While the latent variables generated by the learning process can be viewed as fault-sensitive features, they usually are without any explainable interpretation, e.g., with respect to the system dynamics. In consequence, the current deep learning methods for fault detection are application-oriented and there is a lack of a provable methodical framework for detection of parametric changes in nonlinear dynamical systems.

In this work we propose a deep learning framework designed to detect  faults in parameters of output data collected from nonlinear mechanical systems. The proposed method is inspired by two past contributions: first, the subspace-based fault detection, where the parametric changes in linear time-invariant systems are detected by monitoring 
an expected value of a residual of a subspace of an output covariance Hankel matrix \cite{DOHLER2014207,DOHLER20132734} and second, on learning time-delay embeddings of nonlinear systems using Koopman theory \cite{Brunton2017ChaosHAVOK}.
In our work, the main idea is the following: we train a two-stage (dual) AE where the training of the first stage aims at reproducing a set of delay output covariance Hankel matrices, while the second stage aims at learning directions complementary to the latent space obtained in the first training stage. 
The fault detection problem is then cast in a hypothesis testing framework where the output covariance Hankel matrix obtained from a new data is tested to be a point in the neighborhood of a manifold characterized in the first training stage, or not. The detection residual is obtained based on mappings from both training stages.
The decision about the damage is taken by comparing its norm to a threshold obtained from nominal data. 
Alongside these methodological developments, a mathematical model of the delay output covariance Hankel matrix justifying the manifold embedding and thus the proposed fault detection methodology is developed using the Koopman operator theory.
Our contributions are summarized as follows:
\begin{itemize}
\item we show that a delay output covariance Hankel matrix of a nonlinear system is parametrized by Koopman modes and system states,
\item a collection of covariance Hankel matrices obtained from different realizations of the stochastic system may be interpreted as samples from a local manifold, 
\item a dual AE is used to approximate the manifold and its complementary directions from a collection of nominal data,  
\item a fault detection strategy is proposed where Hankel matrices corresponding to a faulty nonlinear system are shown not to lie on the nominal manifold.
\end{itemize}

The proposed methods is validated on a numerical Monte Carlo simulation of a 6 degree of freedom chain system with nonlinear spring stiffness. Two types of spring nonlinearities are considered: a periodic one and a cubic one. The application consists of a study of two fault diagnosis benchmarks: the S101 bridge and Vestas V27 wind turbine blade. Both in the numerical simulation and in the application, the proposed approach is compared against state of art AEs for fault diagnosis, where it shows a superior performance. 

The remainder of the paper is organized as follows. Section~\ref{sec:prob_state} formulates the problem. Nonlinear modeling based on a Koopman embedding is introduced in Section~\ref{sec:koopman_NL}, while the Hankel matrix structure of the introduced model is presented in Section~\ref{sec:Hankelmanifolds}. Section~\ref{sec:AE_manifold} introduces the proposed autoencoder architecture to learn Hankel-matrix-based representations. The fault detection scheme introduced in Section~\ref{sec:faultdet} is validated with nonlinear dynamical system examples presented in Section~\ref{sec:validation}.  Further evaluation of the detection scheme for real system applications is introduced in Section~\ref{sec:application}, and conclusions are drawn in Section~\ref{sec:conc}.

\section{Problem statement}\label{sec:prob_state}

Consider a parametric discrete-time state-space model to represent the dynamics of the monitored mechanical system
\begin{align}
        x_{k+1} &= f(x_k,\theta,w_k), \label{eq:statenl} \\
        y_k &= h(x_k,\theta) + v_k, \label{eq:outputnl}
\end{align}
where  $x_k \in \RR^m$ denotes the system state vector, $y_k \in \RR^r$ is the output vector, $f(x_k,\theta,w_k)\in \RR^m$ with $h(x_k,\theta)\in \RR^r$ respectively represent nonlinear state transition and measurement maps, and $\theta \in \RR^d$ is fault-sensitive system parameter. The system order $m$ is considered to be known and $r$ is the number of the measured outputs. 
The process $w_k \in \RR^m$ and measurement $v_k \in \RR^r$ noises are assumed to be independent and identically distributed vectors with zero mean and positive definite covariances.
Note that in the above it is assumed that the monitored system is characterized by some physical parameters. The model parameter vector $\theta$ is an image of system parameters. In fact, the physical parameters, model parameter $\theta$ and the system model itself are assumed to be unknown in this work. 

The purpose of this work is to detect changes in system parameter from its nominal value $\theta_0$ based on measurements $y_k$ generated from the system under (unknown) $\theta$ and unknown noise. This is framed within a classical hypothesis testing framework, which is related to detecting changes of a parametric residual vector ${\zeta}_{\theta_0}$
\begin{align}
  &\mathrm{H}_0:\quad {\zeta}_{\theta_0} = 0 \Rightarrow  \theta = \theta_0 \label{eq:iH0}\\
  &\mathrm{H}_1:\quad   {\zeta}_{\theta_0} \neq 0 \Rightarrow \theta \neq \theta_0 \label{eq:iH1}.
\end{align}
The design of ${\zeta}_{\theta_0}$ suitable for detecting parametric changes in nonlinear systems traditionally relies on model-based methods. While purely data-driven machine-learning approaches, e.g., AEs, are also effective in practice in this context, they often provide limited interpretability in terms of the underlying system dynamics and, consequently, offer weak theoretical guarantees for fault detection.

In what follows, we establish a modelling framework where the Koopman operator is used as a tool to obtain a parametric features of a nonlinear system \eqref{eq:statenl}-\eqref{eq:outputnl}. We show that these features span a manifold whose characteristic and complementary representations can be approximated based on data by using a particular autoencoder architecture. Subsequently, both the characteristic manifold and complementary coordinates are used to design a residual which mimics the well-known subspace-based fault detection residual and, under some assumptions, is shown to satisfy hypotheses \eqref{eq:iH0}-\eqref{eq:iH1}.


\section{Background on nonlinear modelling}\label{sec:koopman_NL}

A Koopman embedding for parametric deterministic discrete-time systems is established for pedagogical reasons first, which is generalized to the case of stochastic systems later. Hereafter the symbol $\theta$ is dropped to avoid cluttering the notation in this section.

\subsection{Koopman representation of deterministic system}

For the sake of simplicity, consider an autonomous mechanical system 
\begin{align}
\label{eq:state}
x_{k+1} &= f(x_k)
\end{align}
where $f: \XX \rightarrow{} \RR^m $ is a map, $\XX\subseteq \RR^m$ is a state-space, $x_k\in\XX$ and $k\in\ZZZ$ is discrete time.
A measurement of the system \eqref{eq:state} is defined as 
\[
y_k = h(x_k),
\]
where 
$h: \XX \rightarrow{} \RR^r$, $h \in \FF$ and $\FF$  denotes the set of bounded measurable functions. 
The deterministic Koopman operator $\KK:\FF\rightarrow{}\FF$ is a linear, generally infinite-dimensional, composition operator acting on observables
\[
\KK h(x) = h(f(x)) = h \circ f(x),
\]
where $\circ$ denotes map composition. Assume that the spectrum of the Koopman operator is discrete and let $\lambda_j\in\CCC$ be the eigenvalues of the deterministic Koopman operator and $\phi_j:\XX\to\CCC$ the corresponding deterministic Koopman eigenfunctions satisfying the eigendecomposition $\phi_j(f(x)) = \lambda_j \phi_j(x)$. In general, an arbitrary observable $h\in\FF$ does not admit an expansion in Koopman eigenfunctions. Therefore, we assume that, for each fixed parameter value $\theta$, the measurement map $h$ admits a discrete Koopman spectral representation in terms of eigenfunctions $\{\phi_j\}_{j=1}^{\infty}$ of $\KK$, with corresponding discrete spectrum $\{\lambda_j\}_{j=1}^{\infty}$. This assumption is relevant for mechanical systems operating near an attracting hyperbolic equilibrium and within a bounded operating region, where a few damped or oscillatory components dominate. Thus, for the finite-dimensional construction used below, this representation is truncated to the first $n$ dominant eigenfunctions and eigenvalues. Under this assumption, the measured observable admits the decomposition \cite{Mauroy_Koopmanbook,BEVANDA2021197}
\[
h(x) = \sum_{j=1}^{n} v_j\phi_j(x) + \varepsilon_n(x),
\]
where $v_j\in\CCC^r$ are the deterministic Koopman modes associated with the measured output and $\varepsilon_n(x)$ is a residual from neglected eigenfunctions, or approximation errors. Hereafter it is assumed that, for a prescribed small tolerance $\bar{\varepsilon}>0$, $n$ is chosen sufficiently large such that $\|\varepsilon_n(x)\|_{\infty}\leq \bar{\varepsilon}$ on the region of interest and $h(x)$ is represented by the retained eigenmodes and eigenfunctions of Koopman operator up to this residual. 
Note that this finite-dimensional approximation is applicable to regimes in which the response can be characterized by closed orbits.
In this work, the monitored mechanical systems are assumed to operate in a bounded neighbourhood of a stable regime, where a finite number of Koopman components dominates the measured response. This type of finite spectral approximation is commonly used in Koopman-based reduced-order modeling, for example in systems that are exactly or locally linearizable on the region of interest, including periodic or quasiperiodic dynamics and nonlinear systems in a neighborhood of a hyperbolic equilibrium \cite{BackesDragicevic2022}. The Koopman mode decomposition can then be used to characterize the measured output as
\[
y_k = \sum_{j=1}^{n} v_j\phi_j(x_k)+\varepsilon_n(x_k).
\]

\subsection{Koopman representation of stochastic systems} \label{sec:stochasticKoopman}
 
In this section, we distinguish the random variables from their realizations by using upper-case and lower-case letters, respectively. Thus, $X_k,Y_k,W_k,V_k$ denote random variables corresponding to the state, measured output, process noise, and measurement noise, while $x_k,y_k,w_k,v_k$ denote their realizations. The parameter $\theta$ is treated as deterministic, and the noise-free measured observable is denoted by
$\bar{Y}_k \eqdef h(X_k)$ with $\bar{y}_k = h(x_k)$. With the process noise in \eqref{eq:statenl}, the state evolution
\[
X_{k+1}=f(X_k,W_k)
\]
is stochastic even for a fixed value of $X_k=x$ and $\theta$. Therefore, the Koopman operator associated with \eqref{eq:statenl} is not a deterministic composition operator, but a stochastic one. In this context, for an observable $g\in\FF$, the stochastic Koopman operator $\KK$ is defined as a conditional-expectation operator
\[
(\KK g)(x) \eqdef \E\left[g(X_{k+1})\mid X_k=x\right].
\]
In particular, for the measurement map $h$, using \eqref{eq:statenl} gives
\[
(\KK h)(x) = \E\left[h(f(x,W_k))\right],
\]
where the expectation is taken with respect to the distribution of the process noise $W_k$. Thus, $\KK$ remains linear on the observable space $\FF$, but it describes the evolution of conditional expectations, rather than sample path evolutions. Let $\lambda_j\in\CCC$ be the eigenvalues of the stochastic Koopman operator and let $\phi_j(x):\XX\to\CCC$ be the corresponding eigenfunctions. They satisfy
\[
(\KK \phi_j)(x) = \E\left[\phi_j(X_{k+1})\mid X_k=x\right]= \E\left[\phi_j(f(x,W_k))\right] = \lambda_j\phi_j(x).
\]
As in the deterministic case, we assume that the measurement map admits a discrete spectral representation, which is truncated to the first $n$ dominant components
\[
h(x) = \sum_{j=1}^{n}v_j\phi_j(x) + \varepsilon_n(x),
\]
where $v_j\in\CCC^r$ are the stochastic Koopman modes associated with the measured output and $\varepsilon_n(x)$ is the residual defined similarly as in the previous section.
The stochastic Koopman mode decomposition then characterizes the expected value of the noise-free measured response $\bar{Y}_k$ as
\[
\E\left[\bar{Y}_k\mid X_0=x_0\right] = \sum_{j=1}^{n} \lambda_j^k v_j \phi_j(x_0) + ((\KK)^k\varepsilon_n)(x_0),
\]
where
\[
((\KK)^k\varepsilon_n)(x_0) = \E\left[\varepsilon_n(X_k)\mid X_0=x_0\right].
\]
In the next step, let the system output be perturbed by the measurement noise $V_k$, $Y_k \eqdef \bar{Y}_k+V_k$. 
Since $V_k$ is zero mean and independent of the state process, the same expression also describes the conditional mean of the measured output
 with
\[
\E\left[Y_k\mid X_0=x_0\right] = \E\left[\bar{Y}_k\mid X_0=x_0\right],
\]
where for an individual realization of the output
\beq
\label{eq:ykstoch}
y_k = \sum_{j=1}^{n}v_j\phi_j(x_k) +\varepsilon_n(x_k) + v_k .
\enq

\section{Hankel matrix manifold of a nonlinear system}\label{sec:Hankelmanifolds}

In this section, first, a parametric link between the Koopman operator and the output covariances of the stochastic system is established and second, it is shown that the collection of the output data covariances are points on a local smooth manifold. 

\begin{definition}[Data matrix] \label{def:datamatrix_def} Let $m_k\in\RR^{b}$ be a discrete signal at time step $k$, $k = 0\hdots N+2p-1$ where $N+2p-1$ denotes the sample size and let the parameter $p$ define a data horizon. For $0\leq i\leq j \leq 2p - 1$ the data matrix $\MM_{i|j}$ writes
\begin{equation*}
\label{pastfutureY}
\MM_{i|j} \eqdef \frac{1}{\sqrt{N}}
\begin{bmatrix}
m_{i} & m_{i+1}& \ldots & m_{i+N-1}\\
m_{i+1} & m_{i+2}& \ldots & m_{i+N}\\
\vdots & \vdots & \vdots & \vdots\\
m_{j} & m_{j+1}& \ldots & m_{j+N-1}
\end{bmatrix}
\in \mathbb{R}^{(j-i+1)b \times N}.\!\!
\end{equation*}

\end{definition}

After the stochastic notation introduced in the previous section, $Y_k$ and $V_k$ denote random variables, while $y_k$ and $v_k$ denote their realizations. Let $\mathsf{Y}^- \eqdef \mathsf{Y}_{0|p-1}\in \RR^{pr \times N}$, $\mathsf{Y}^+ \eqdef \mathsf{Y}_{p\mid2p-1} \in \RR^{pr \times N}$ and $\mathsf{V}^- \eqdef \mathsf{V}_{0|p-1}\in \RR^{pr \times N}$, $\mathsf{V}^+ \eqdef \mathsf{V}_{p|2p-1} \in \RR^{pr \times N}$ denote the random data matrices obtained from the output and measurement-noise processes by Definition \ref{def:datamatrix_def}. Their realizations, obtained from samples $y_k$ and $v_k$, are denoted by ${\YY}^- \eqdef {\YY}_{0|p-1}$, ${\YY}^+ \eqdef {\YY}_{p\mid2p-1}$ and ${\VV}^- \eqdef {\VV}_{0|p-1}$, ${\VV}^+ \eqdef {\VV}_{p|2p-1}$, respectively. For example, ${\YY}_{0|p-1}$ is obtained from the realization $y_k$ and the $k$-th column of $\YY^-$ contains the past block $\vect{y_k^\top & y_{k+1}^\top& \hdots & y_{k+p-1}^\top}^\top$.
To characterize the parametric structure of data matrices, some notation is introduced first. Define $V(\theta) \eqdef \vect{v_1(\theta) & \hdots & v_n(\theta)} \in \CCC^{r\times n}$, $\Lambda(\theta) \eqdef \diag(\lambda_1(\theta),\hdots,\lambda_n(\theta)) \in \CCC^{n\times n}$ and 
\begin{align*}
   \phi(x_k,\theta) \eqdef \vect{\phi_1(x_k,\theta) \\ \vdots \\ \phi_n(x_k,\theta)} \in \CCC^{n} .
\end{align*}

Using Definition \ref{def:datamatrix_def} and output realization in \eqref{eq:ykstoch}, define
\begin{align*}
    \bar{\YY}^- \eqdef \vect{V(\theta)\phi(x_0,\theta) & V(\theta)\phi(x_1,\theta) & \hdots & V(\theta)\phi(x_{N-1},\theta) \\ V(\theta)\phi(x_1,\theta) & V(\theta)\phi(x_2,\theta) & \hdots & V(\theta)\phi(x_N,\theta) \\ \vdots & \vdots & \ddots & \vdots \\ V(\theta)\phi(x_{p-1},\theta) & V(\theta)\phi(x_p,\theta) & \hdots & V(\theta)\phi(x_{N+p-2},\theta)  }
\end{align*}
\begin{align*}
    \bar{\YY}^+ \eqdef \vect{V(\theta)\phi(x_p,\theta) & V(\theta)\phi(x_{p+1},\theta) & \hdots & V(\theta)\phi(x_{p+N-1},\theta) \\ V(\theta)\phi(x_{p+1},\theta) & V(\theta)\phi(x_{p+2},\theta) & \hdots & V(\theta)\phi(x_{N+p},\theta) \\ \vdots & \vdots & \ddots & \vdots \\ V(\theta)\phi(x_{2p-1},\theta) & V(\theta)\phi(x_{2p},\theta) & \hdots & V(\theta)\phi(x_{N+2p-2},\theta)}
\end{align*}
\begin{align*}
    {\RRR}^- \eqdef \vect{ \varepsilon_n(x_0) & \varepsilon_n(x_1) & \hdots & \varepsilon_n(x_{N-1}) \\ \varepsilon_n(x_1) & \varepsilon_n(x_2) & \hdots & \varepsilon_n(x_N) \\ \vdots & \vdots & \ddots & \vdots \\ \varepsilon_n(x_{p-1}) & \varepsilon_n(x_p) & \hdots & \varepsilon_n(x_{N+p-2})  }, \ \ 
    {\RRR}^+ \eqdef \vect{\varepsilon_n(x_p) & \varepsilon_n(x_{p+1}) & \hdots & \varepsilon_n(x_{p+N-1}) \\ \varepsilon_n(x_{p+1}) & \varepsilon_n(x_{p+2}) & \hdots & \varepsilon_n(x_{N+p}) \\ \vdots & \vdots & \ddots & \vdots \\ \varepsilon_n(x_{2p-1}) & \varepsilon_n(x_{2p}) & \hdots & \varepsilon_n(x_{N+2p-2})}
\end{align*}
The past ${\YY}^-$ and the future ${\YY}^+$ block Hankel matrices can be then written as ${\YY}^- = \bar{\YY}^- + \RRR^- + \VV^-$ and ${\YY}^+ = \bar{\YY}^+ + \RRR^+ + \VV^+$, where $\RRR^+$ with $\RRR^-$ are data matrices corresponding to the truncation error terms. 
To further explore the covariance structure of the realized data define  
\[
\OO(\theta) \eqdef \vect{V(\theta) \\ V(\theta) \Lambda(\theta) \\ \vdots \\ V(\theta)\Lambda(\theta)^{p-1}} \in \CCC^{pr\times n} .
\]

\begin{proposition}[Koopman embedding of output covariance Hankel matrix] \label{prop:hankel_covariance_koopman} Assume that the chosen stochastic Koopman eigenfunctions have bounded fourth moments and that for each fixed $\theta$ the Markov chain generated by \eqref{eq:statenl} is ergodic. Then, for a fixed finite realization, the empirical delay output covariance estimate admits the factorization 
\begin{align*}
\est{\HH}(x,\theta) &\eqdef \YY^+ {\YY^-}^\top  = {\OO}(\theta){\Gamma}(x,\theta)  + O(\bar{\varepsilon}) + O(\bar{\varepsilon}^2) + o(1) \\ &\approx \est{\OO}(\theta)\est{\Gamma}(x,\theta) ,
\end{align*}
where $(\est{\cdot})$ denote estimates of the exact quantities and
\[
\Gamma(x,\theta) \eqdef \vect{\Lambda(\theta)^p G(x,\theta){V}(\theta)^\H &\Lambda(\theta)^{p-1} G(x,\theta){V}(\theta)^\H &\hdots &
\Lambda(\theta) G(x,\theta){V}(\theta)^\H} \in \CCC^{n\times pr}, 
\]
with $G(x,\theta) \eqdef \frac{1}{{N}} \sum^{N-1}_{k=0} \phi(x_k,\theta) \phi(x_k,\theta)^\H$. 
\end{proposition}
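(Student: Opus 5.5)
The plan is to expand the product $\YY^+{\YY^-}^\top$ using the decompositions ${\YY}^- = \bar{\YY}^- + \RRR^- + \VV^-$ and ${\YY}^+ = \bar{\YY}^+ + \RRR^+ + \VV^+$. This gives nine cross terms, which we sort into four groups:
\begin{align*}
\YY^+{\YY^-}^\top &= \bar{\YY}^+{\bar{\YY}^-}^\top + \big(\bar{\YY}^+{\RRR^-}^\top + \RRR^+{\bar{\YY}^-}^\top\big) + \RRR^+{\RRR^-}^\top \\
&\quad + \big(\bar{\YY}^+{\VV^-}^\top + \VV^+{\bar{\YY}^-}^\top + \RRR^+{\VV^-}^\top + \VV^+{\RRR^-}^\top + \VV^+{\VV^-}^\top\big).
\end{align*}
Throughout, the normalisation $1/\sqrt{N}$ in Definition~\ref{def:datamatrix_def} is applied to each factor, so every product is an empirical average over $N$ columns.

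\textbf{The signal term.} The $(i,j)$ block of $\bar{\YY}^+{\bar{\YY}^-}^\top$ is $\frac1N\sum_k V\phi(x_{k+p+i})\phi(x_{k+j})^\H V^\H$. The key step is to replace $\phi(x_{k+\ell})$ by $\Lambda^\ell\phi(x_k)$. By the eigenrelation of the stochastic Koopman operator from Section~\ref{sec:stochasticKoopman}, $\phi(x_{k+\ell}) = \Lambda^\ell\phi(x_k) + \eta_{k,\ell}$. Here $\eta_{k,\ell}$ is a sum of martingale differences $\phi(x_{s+1})-\E[\phi(X_{s+1})\mid X_s=x_s]$, propagated by powers of $\Lambda$.

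Inserting this expansion into the block, the products containing $\eta$ are averages of martingale-difference-type terms multiplied by $\phi$'s. The bounded fourth moments give finite variance of these products, and ergodicity gives a law of large numbers, so the averages are $o(1)$ as $N\to\infty$. This step also has to absorb the index bookkeeping: the past rows have their $\Lambda$ powers with a possible reindexing $j\to p-1-j$, and the future rows carry $\Lambda^{p+i}$.

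What remains is
\[
V\Lambda^{i}\,\Lambda^{p-j}\Big(\tfrac1N\textstyle\sum_k\phi(x_k)\phi(x_k)^\H\Big)V^\H = V\Lambda^i\,\Lambda^{p-j}GV^\H .
\]
With the stated ordering of $\Gamma$, this is exactly the block form $\OO(\theta)\Gamma(x,\theta)$. The shift convention should be chosen so that the powers run $\Lambda^p,\dots,\Lambda^1$, with the lags measured relative to a common reference column. Because the eigenvalues may be complex, we use $\phi^\H$ and $V^\H$, and then note that the full product is real since $y$ is real and the modes come in conjugate pairs.

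\textbf{The truncation terms.} The mixed terms $\bar{\YY}^+{\RRR^-}^\top$ and its mirror are bounded blockwise by $\bar\varepsilon\cdot\frac1N\sum_k\|V\phi(x_k)\|$. By ergodicity and the moment assumption this average is bounded, so these terms are $O(\bar\varepsilon)$. The term $\RRR^+{\RRR^-}^\top$ is bounded by $\bar\varepsilon^2$, using $\|\varepsilon_n\|_\infty\le\bar\varepsilon$.

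\textbf{The noise terms.} Every product involving $\VV^\pm$ is an empirical cross-covariance between $v_k$ and either the state-driven signal or $v$ at a nonzero lag. The future and past indices differ by at least one, since the minimum lag is $p-(p-1)=1$. Because $v$ is i.i.d., zero mean, and independent of the state, each such average has zero mean and variance $O(1/N)$, hence is $o(1)$ in probability. Collecting the four groups gives the exact expansion; dropping the small terms and replacing the quantities by finite-sample estimates gives $\est{\OO}\est{\Gamma}$.

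\textbf{Main obstacle.} I expect the hardest part to be controlling the $\eta_{k,\ell}$ cross terms in the signal part. The terms $\frac1N\sum_k \eta_{k,\ell}\phi(x_k)^\H$ vanish because $\eta$ is a martingale difference with respect to the past. The products $\eta\eta^\H$, however, do not vanish: their mean is a conditional covariance of $\phi$. I would first try to argue that these contributions are either lower order or absorbed into the $o(1)$/approximation symbol, since the statement is asymptotic and ``$\approx$''. If that fails, I would redefine $G$ as the ergodic limit with these covariance corrections folded in, which preserves the factorization $\OO\Gamma$ because the corrections also sit between $V\Lambda^i$ and $V^\H$.
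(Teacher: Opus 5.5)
Your overall structure is the paper's, and those parts are fine: the nine-term expansion, the $O(\bar\varepsilon)$ and $O(\bar\varepsilon^2)$ bounds on the truncation terms, and the zero-mean, lag-$\geq 1$ noise cross-covariances. The gap is in the signal term.

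You anchor both factors at a common column $x_k$, so the future row carries $\Lambda^{p+i}$ and the past row carries $\Lambda^{j}$. With that anchoring the leading term is $V\Lambda^{p+i}G(\Lambda^{j})^\H V^\H$, not $V\Lambda^{p+i-j}GV^\H$. Moreover, for $j\geq 1$ the leftover average $\frac1N\sum_k\eta_{k,p+i}\eta_{k,j}^\H$ converges to a nonzero conditional covariance of $\phi$. It is therefore $O(1)$, not $o(1)$.

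So your first plan, absorbing this term into $o(1)$ or ``$\approx$'', fails. Your fallback, redefining $G$, changes the statement, in which $G$ is explicitly the empirical second moment $\frac1N\sum_k\phi(x_k)\phi(x_k)^\H$. As written, your display $V\Lambda^i\Lambda^{p-j}GV^\H$ does not follow from the steps before it.

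The missing idea, which is exactly what the paper does, is to leave the past factor untouched and expand only the future factor about the past sample. With $\ell=p+i-j\geq 1$, write $\phi(x_{k+p+i})=\Lambda^{\ell}\phi(x_{k+j})+\eta_{k+j,\ell}$. The $(i,j)$ block then becomes $V\Lambda^{\ell}\big(\frac1N\sum_k\phi(x_{k+j})\phi(x_{k+j})^\H\big)V^\H+V\big(\frac1N\sum_k\eta_{k+j,\ell}\phi(x_{k+j})^\H\big)V^\H$. This has three consequences:
\begin{itemize}
\item No $\eta\eta^\H$ product ever appears.
\item The single cross term has zero conditional mean given $X_{k+j}$, so it is $o(1)$ by ergodicity.
\item The first average differs from $G$ only by $2j\leq 2(p-1)$ boundary summands divided by $N$, so it equals $G+o(1)$.
\end{itemize}
Since $\Lambda^{p+i-j}=\Lambda^i\Lambda^{p-j}$, stacking gives $\OO(\theta)\Gamma(x,\theta)$ directly, with no reindexing of $j$ needed.

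Your common-anchor route can be rescued, but only through the identity $\eta_{k,p+i}=\Lambda^{p+i-j}\eta_{k,j}+\eta_{k+j,p+i-j}$ together with stationarity of the second moment, $\Lambda^jG(\Lambda^j)^\H+\E[\eta_{k,j}\eta_{k,j}^\H]\approx G$. That amounts to the same re-anchoring.
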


\proof
See \ref{app:proofHcovKop}.
\endproof

Each finite output realization gives one realization of the empirical covariance estimate $\est{\HH}(x,\theta)$. What remains to show is that a collection of vectorized $\est{\HH}(x,\theta)$ are points on a local smooth manifold parametrized by $\theta$ and by the empirical second-order Koopman features contained in $\est{\Gamma}(x,\theta)$. 
For this purpose, consider a vectorized estimate of the Hankel covariance matrix
\begin{align}
\label{eq:HankelVec}
\est{h}(x,\theta) \eqdef \vecc(\est{\HH}(x,\theta)) = \BBB(\theta) g(x,\theta),
\end{align}
where
\[
\BBB(\theta) \eqdef I_{pr} \otimes \est{\OO}(\theta) \in \CCC^{(pr)^2\times npr}, \qquad g(x,\theta) \eqdef \vecc(\est{\Gamma}(x,\theta)) \in \CCC^{npr},
\]
and $(\bar\cdot)$ denotes complex conjugate, $\vecc(\cdot)$ denotes the column-stacking vectorization operator.

\begin{proposition}[Local manifold structure of $\est{h}$]
\label{prop:hankel_covariance_manifold}
Let $\mathcal X_N$ denote a smooth manifold of finite state
realizations required to construct the data matrices in
Definition~\ref{def:datamatrix_def}. 
%
Assume that the chosen stochastic Koopman eigenfunctions are smooth on the region of interest, and that the map
$x\mapsto\est{h}(x,\theta)$ in \eqref{eq:HankelVec} has the same rank in a neighborhood of each considered realization. Then any finite collection of vectorized covariance estimates $\{\est{h}^{(j)}(x,\theta)\}_{j=1}^J$, each obtained from one finite realization of the stochastic system through \eqref{eq:HankelVec}, may be interpreted as samples from a neighborhood of a smooth embedded manifold
\[
\mathcal M_H(\theta) = \left\{\est{h}(x,\theta):x\in\mathcal X_N\right\}\subset\mathbb R^{(pr)^2}.
\]
\end{proposition}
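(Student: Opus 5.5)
The natural route is the constant rank theorem applied to the smooth map $F_\theta:\mathcal X_N\to\mathbb R^{(pr)^2}$, $F_\theta(x)\eqdef\est{h}(x,\theta)=\BBB(\theta)\,g(x,\theta)$ from \eqref{eq:HankelVec}, for fixed $\theta$. The first step is to establish smoothness of $F_\theta$. By Proposition~\ref{prop:hankel_covariance_koopman}, $\est{\Gamma}(x,\theta)$ is built from $\Lambda(\theta)$, $V(\theta)$ and the empirical Gram matrix $G(x,\theta)=\frac1N\sum_{k=0}^{N-1}\phi(x_k,\theta)\phi(x_k,\theta)^\H$. Since the eigenfunctions are assumed smooth on the region of interest, each entry of $G$ is a finite sum of products of smooth functions and their conjugates, hence smooth in $x=(x_0,\dots,x_{N+2p-2})\in\mathcal X_N$. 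The map $g(x,\theta)=\vecc(\est{\Gamma}(x,\theta))$ is linear in $G$ with $\theta$-dependent coefficients, and $\BBB(\theta)=I_{pr}\otimes\est{\OO}(\theta)$ does not depend on $x$. So $F_\theta$ is the composition of a smooth map with a fixed linear map, and is smooth. One caveat needs care here: $\BBB$ and $g$ are complex, but $\est h=\vecc(\YY^+{\YY^-}^\top)$ is real. Conjugate pairs of eigenvalues and modes make the sum real, and I would note explicitly that $F_\theta$ takes values in $\mathbb R^{(pr)^2}$, so that real differential geometry applies.

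The second step applies the constant rank theorem. By hypothesis, $dF_\theta$ has constant rank $q$ on a neighborhood $U$ of each considered realization $x^{(j)}$. The theorem then gives charts around $x^{(j)}$ and $F_\theta(x^{(j)})$ in which $F_\theta$ reads $(z_1,\dots,z_{\dim\mathcal X_N})\mapsto(z_1,\dots,z_q,0,\dots,0)$. Consequently, after possibly shrinking $U$, the image $F_\theta(U)$ is a $q$-dimensional smooth embedded submanifold of $\mathbb R^{(pr)^2}$. This is the local piece of $\mathcal M_H(\theta)$ around the $j$-th sample.

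The third step passes from exact images to the samples. The actual estimates differ from $\est{\OO}\est{\Gamma}$ by the terms $O(\bar\varepsilon)+O(\bar\varepsilon^2)+o(1)$ of Proposition~\ref{prop:hankel_covariance_koopman}. Each $\est h^{(j)}$ therefore lies within a distance of that order from $F_\theta(U_j)$, that is, in a tubular neighborhood of the local manifold. This justifies the phrase ``samples from a neighborhood''. With finitely many $j$, we may take $\mathcal M_H(\theta)$ to be the union of these local embedded pieces. Ergodicity keeps the realizations in a common bounded operating region, so the pieces all sit inside the same $F_\theta(\mathcal X_N)$.

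The main obstacle is \emph{embeddedness}, as opposed to mere immersion. The constant rank theorem only gives a local image manifold, and globally $F_\theta(\mathcal X_N)$ may self-intersect, because distinct state trajectories can produce the same Gram matrix $G$. I would handle this by keeping the claim local: around each sample, restrict to a neighborhood $U_j$ small enough that $F_\theta|_{U_j}$ factors as a submersion onto its image followed by an embedding. This is exactly what the normal form guarantees. The statement ``smooth embedded manifold'' is then read locally, in the sense of ``a neighborhood of''. A secondary point is that the statement treats the realization $x$ as deterministic, so the randomness enters only through which point of $\mathcal X_N$ is realized. This is consistent with fixing $\theta$ and conditioning on the realization.
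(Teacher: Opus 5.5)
Your proposal is correct and follows the paper's argument. Smooth eigenfunctions make the entries of $G(x,\theta)$ smooth, $\BBB(\theta)$ does not depend on $x$, so $x\mapsto\est{h}(x,\theta)$ is smooth, and the constant-rank hypothesis (the constant rank theorem, which the paper uses implicitly) makes the image of a small neighborhood an embedded manifold whose dimension is the local rank. Your additions go slightly beyond the paper: real-valuedness via conjugate pairs, and the $O(\bar\varepsilon)+o(1)$ gap between the empirical $\YY^+{\YY^-}^\top$ and $\est{\OO}\est{\Gamma}$ to explain the word ``neighborhood''. One caution: a finite union of the local pieces need not itself be an embedded manifold, so, as in the paper, your conclusion is genuinely local rather than a statement about the whole image.
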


\proof
See \ref{app:proof_hankel_covariance_manifold}.
\endproof


\section{Manifold learning}\label{sec:AE_manifold}

Hereafter we adapt a two-stage autoencoder architecture to identify a low-dimensional latent representation that parametrizes the neighborhood of the manifold $\mathcal M_H(\theta)$ characterized in Proposition \ref{prop:hankel_covariance_manifold} and additionally, capture directions complementary to the characteristic manifold. In what follows, we introduce the necessary background for learning these representations. 

A typical autoencoder contains an encoder and a decoder neural network \citep{hinton2006reducing}. The encoder is used to map the data into lower dimensional latent representation while the decoder reconstructs the original data from the mapped latent representation with a small margin of error. Hereafter, we adopt a particular autoencoder training scheme inspired by \cite{shen2025analytical}, where the characteristic local manifold is identified based on samples of $\est{h}(x,\theta)$, alongside its normal bundle, i.e., directions complementary to the latent space of an autoencoder. For simplicity, we drop the symbols denoting the parametrization with the system parameter $\theta$ and state trajectories $x$ from the notation in this section. 

Recall from Proposition \ref{prop:hankel_covariance_manifold} that $\est{h}^{(j)}$ are $j= 1\hdots J$ samples from a smooth manifold $\MM_H \subset \mathbb{R}^{(pr)^2}$, where $J$ denotes the number of training data sets. Recall that $p$ denotes the number of time delays in the output covariance Hankel matrix and $r$ is the number of output sensors and in applications to large scale mechanical systems $(pr)^2$ can be large. The intrinsic dimension of $\MM_H$, however, is assumed to be much smaller than $(pr)^2$. The purpose of the first training is to learn an encoder map $\varphi:\RR^{(pr)^2}\to\RR^\ell$ and a decoder map $\psi:\RR^\ell\to\RR^{(pr)^2}$ such that each sample $\est{h}^{(j)}$ is mapped to a latent coordinate $z^{(j)}\in\RR^\ell$, where $\ell$ is the dimension of the latent space, and then reconstructed as $\check{h}^{(j)}$
\begin{align}\label{eq:AEMAPS}
\begin{array}{c}
\est{h}^{(j)} \xmapsto{\varphi} {z}^{(j)} \xmapsto{\psi} \check{h}^{(j)} \\
\{ \est{h}^{(1)}...\est{h}^{(J)}\} \longmapsto
\{z^{(1)}...z^{(J)}\} \longmapsto \{ \check{h}^{(1)}...\check{h}^{(J)}\},
\end{array}
\end{align}
which is achieved by minimizing an error between each $\est{h}^{(j)}$ and its corresponding reconstruction $\check{h}^{(j)}$  
\begin{align}
    \mathcal{L}_h=\sum_{j = 1}^{J}\|\est{h}^{(j)}-\check{h}^{(j)}\|.
    \label{eq:manifold_loss}
\end{align}
This process enables identifying the low-dimensional latent space of the manifold $\MM_H$. The autoencoder mappings used in the first training write as follows 
\begin{align*}
\est{h}^{(j)} \in \MM_H \subset \mathbb{R}^{(pr)^2} \xrightarrow{\varphi_1} \mathbb{R}^t \xrightarrow{\varphi_2} \mathbb{R}^\ell \ni z^{(j)} \xrightarrow{\psi_1} \mathbb{R}^t  \xrightarrow{\psi_2} \mathbb{R}^{(pr)^2} \ni \check{h}^{(j)}
\end{align*}
where $t$ is the dimension of the hidden layer and
\beq
\label{1st_train_map}
\varphi \eqdef \varphi_2\circ \varphi_1, \quad \psi \eqdef \psi_2\circ \psi_1
\enq
where for each training sample
\[
z^{(j)}=\varphi(\est{h}^{(j)}),
\qquad
\check h^{(j)}=\psi(z^{(j)}).
\]

In the second learning stage, the latent layer is augmented with complementary coordinates. The purpose of this stage is to learn  directions which are complementary to the characteristic latent representation of the manifold $\MM_H$. Let $\AAA\subset\RR^{(pr)^2}$ be an ambient hyper-rectangle containing the learned manifold, i.e., $\MM_H\subset \AAA$, and let $a^{(i)}\in\AAA$, $i=1\hdots N_a$, denote samples from this set. Subsequently, let $\tilde z^{(i)}\in\RR^{(pr)^2-\ell}$ denote the complementary latent variable, which represents the component of the ambient sample $a^{(i)}$ that is not captured by the characteristic latent coordinate $z^{(i)}\in\RR^\ell$ associated with the learned manifold.
The additional layer of the network is constructed such that the previous encoder maps $\varphi_1$ and $\varphi_2$ remain fixed, so that the characteristic coordinate $z^{(i)}=\varphi_2(\varphi_1(a^{(i)}))$ is preserved. A new complementary encoder map $\bar{\varphi}_2:\RR^t\to\RR^{(pr)^2-\ell}$ is then introduced to compute $\tilde z^{(i)}=\bar{\varphi}_2(\varphi_1(a^{(i)}))$. The decoder is also augmented by adding a complementary decoder contribution $\bar{\psi}_2:\RR^{(pr)^2-\ell}\to\RR^t$ so that the reconstructed ambient sample is obtained from both the characteristic and complementary coordinates. The second-stage mapping can be written as
\begin{align} \label{eq:AEMAPS2}
a^{(i)}\in\AAA\subset\RR^{(pr)^2}\xmapsto{\varphi_1}\RR^t\xmapsto{ \bar{\varphi}_2}\RR^{(pr)^2-\ell}\ni \tilde z^{(i)}\xmapsto{\bar{\psi}_2}\RR^t\xmapsto{\psi_2}\RR^{(pr)^2}\ni \check{a}^{(i)},
\end{align}
where
\begin{align}
\label{2nd_train_map}
z^{(i)} = \varphi_2(\varphi_1(a^{(i)})),\quad
\tilde z^{(i)} = \bar{\varphi}_2(\varphi_1(a^{(i)})),\quad
\check{a}^{(i)} = \psi_2\left(\psi_1(z^{(i)})+\bar{\psi}_2(\tilde z^{(i)})\right).
\end{align}
The mappings $\bar{\varphi}_2$ and $\bar{\psi}_2$ are optimized by minimizing the reconstruction loss
\[
\mathcal{L}_a=\sum_{i=1}^{N_a}\|a^{(i)}-\check{a}^{(i)}\|^2.
\]

The double autoencoder maps provide latent representations on which the change-detection residual is built in the next section. Each vectorized covariance Hankel matrix is treated as one point in the ambient data space, while the collection of nominal matrices obtained from different finite realizations locally forms the characteristic manifold $\MM_H$. This construction provides a nonlinear counterpart of subspace-based fault detection: in the linear case, nominal Hankel matrices are characterized by a fixed subspace and faults are detected through directions complementary to this subspace, whereas here the fixed subspace is replaced by a locally nonlinear manifold. The map $\varphi_1$ encodes the data into a high-dimensional latent representation, $\varphi_2$ extracts the part associated with $\MM_H$, and $\bar{\varphi}_2$ extracts the complementary coordinates describing local deviations from $\MM_H$. For points on the manifold, the complementary coordinate is expected to be null under idealized conditions, whereas points away from $\MM_H$ require nonzero complementary coordinates for accurate reconstruction. This representation is used to decide whether a covariance Hankel matrix is consistent with the manifold identified from nominal data. A graphical summary of the involved mappings $\{\varphi_1,\varphi_2,\psi_1,\psi_2\}$ and $\{\bar{\varphi}_2,\bar{\psi}_2\}$ is depicted in Figure~\ref{fig:mappings}.

\begin{figure}[t!] 
\center
\includegraphics[width=0.48\textwidth]{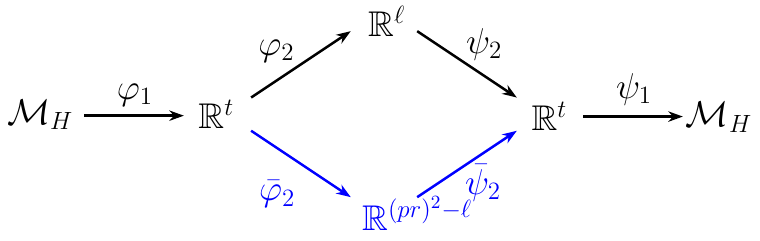}
\caption{Autoencoder mappings in the first and second training.} 
\label{fig:mappings}
\end{figure}

\section{Fault detection}\label{sec:faultdet}

For a fixed value of the nominal parameter $\theta_0$, variations of the nominal state trajectory are represented as variations along $\MM_H(\theta_0)$. In contrast, changes of $\theta$ from its baseline value are assumed to move the Hankel matrix away from $\MM_H(\theta_0)$, producing a nonzero transverse residual. Therefore, we assume that the manifolds $\MM_H(\theta_0)$ and $\MM_H(\theta)$ are locally separated. 

Recall that $\est{h}^{(j)}(\theta_0)$, $j= 1\hdots J$ denotes a collection of vectorized output covariance Hankel matrices obtained from the nominal data sets and let $\varphi_{1,\theta_0}$ and $\bar{\varphi}_{2,\theta_0}$ denote the corresponding autoencoder mappings. Consider a Hankel matrix obtained from data collected under an unknown value of system parameter $\est{h}(x,\theta)$. 
The decision about the change of $\theta$ can be achieved by monitoring deviations of $\est{h}(x,\theta)$ from the baseline manifold $\MM_H(\theta_0)$. To this end, we design a hypothesis testing framework based on a residual map $\zeta_{\theta_0} \eqdef  \bar{\varphi}_{2,\theta_0}\circ \varphi_{1,\theta_0}$, which is evaluated at each $\est{h}(x,\theta)$ as $\zeta_{\theta_0}(\est{h}(x,\theta)) =  \bar{\varphi}_{2,\theta_0}(\varphi_{1,\theta_0}(\est{h}(x,\theta)))$.

\begin{lemma}[Change detection hypotheses] \label{lemma:changehypotheses} Let $\DD \subset \RR^{(pr)^2}$ be the local domain on which the learned coordinates are considered with $\MM_H(\theta_0)\subset\mathcal D$. Assume an idealized autoencoder training, i.e., $\zeta_{\theta_0}^{-1}(0) \cap \DD = \MM_H(\theta_0)$. Then, for every $\est{h}(x,\theta)\in \DD$ it holds
\begin{align} 
  \mathrm{H}_0:\quad {\zeta}_{\theta_0}(\est{h}(x,\theta)) = 0  \Rightarrow \est{h}(x,\theta) \in \MM_H(\theta_0) \label{eq:H0}\\
  \mathrm{H}_1:\quad {\zeta}_{\theta_0}(\est{h}(x,\theta)) \neq 0 \Rightarrow \est{h}(x,\theta) \notin \MM_H(\theta_0). \label{eq:H1}
\end{align}

\end{lemma}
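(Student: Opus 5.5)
The argument is essentially set-theoretic once the idealized-training assumption $\zeta_{\theta_0}^{-1}(0)\cap\DD=\MM_H(\theta_0)$ is read as an identity of subsets of $\RR^{(pr)^2}$. I will fix an arbitrary $\est{h}(x,\theta)\in\DD$, where $\theta$ is unknown and possibly different from $\theta_0$. Both implications \eqref{eq:H0} and \eqref{eq:H1} are then statements about membership of this single point in $\zeta_{\theta_0}^{-1}(0)\cap\DD$. By construction $\zeta_{\theta_0}=\bar{\varphi}_{2,\theta_0}\circ\varphi_{1,\theta_0}$ is a well-defined map on $\RR^{(pr)^2}$, so its zero set $\zeta_{\theta_0}^{-1}(0)$ is a well-defined subset. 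No regularity of the network maps is needed for the logical step.

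For $\mathrm{H}_0$, suppose $\zeta_{\theta_0}(\est{h}(x,\theta))=0$. Then $\est{h}(x,\theta)\in\zeta_{\theta_0}^{-1}(0)$. Since we also assumed $\est{h}(x,\theta)\in\DD$, the point lies in $\zeta_{\theta_0}^{-1}(0)\cap\DD$, and the assumption gives $\est{h}(x,\theta)\in\MM_H(\theta_0)$.

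For $\mathrm{H}_1$, I will argue by contraposition. Suppose $\est{h}(x,\theta)\in\MM_H(\theta_0)$. The reverse inclusion $\MM_H(\theta_0)\subseteq\zeta_{\theta_0}^{-1}(0)\cap\DD$, which is also part of the assumed equality, gives $\zeta_{\theta_0}(\est{h}(x,\theta))=0$. Hence a nonzero residual forces $\est{h}(x,\theta)\notin\MM_H(\theta_0)$. The hypothesis $\MM_H(\theta_0)\subset\DD$ makes the equality assumption consistent and ensures that membership in $\DD$ is automatic for nominal points.

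I will then add a remark linking this to \eqref{eq:iH0}--\eqref{eq:iH1}. By Proposition~\ref{prop:hankel_covariance_manifold}, nominal realizations yield points (or near-points) of $\MM_H(\theta_0)$. By the local separation assumption stated at the start of Section~\ref{sec:faultdet}, $\MM_H(\theta)\cap\MM_H(\theta_0)\cap\DD=\emptyset$ for $\theta\neq\theta_0$ near $\theta_0$. Therefore $\est{h}(x,\theta)\notin\MM_H(\theta_0)$ is equivalent to $\theta\neq\theta_0$, which recovers the parametric hypotheses.

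The main obstacle is not the logic but being honest about what is idealized. Empirically, $\est{h}$ lies only in a neighborhood of $\MM_H(\theta_0)$, because of the $O(\bar\varepsilon)$ and $o(1)$ terms in Proposition~\ref{prop:hankel_covariance_koopman}, and a trained autoencoder gives only an approximate zero set. I would therefore state explicitly that the exact equivalence holds under the idealized assumption, and defer thresholding on $\|\zeta_{\theta_0}\|$ to the practical test.
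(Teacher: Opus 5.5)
Your argument is correct and matches the paper's own proof, which only states that both implications follow directly from the assumed identity $\zeta_{\theta_0}^{-1}(0)\cap\DD=\MM_H(\theta_0)$. You fill in the omitted details by using the inclusion $\subseteq$ for $\mathrm{H}_0$ and the reverse inclusion, by contraposition, for $\mathrm{H}_1$; your closing remark linking non-membership to $\theta\neq\theta_0$ goes beyond the lemma and rests on reading ``locally separated'' as disjointness within $\DD$, which the paper also treats as a separate standing assumption.
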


\proof
The proof follows directly from the assumed property $\zeta_{\theta_0}^{-1}(0)\cap \mathcal D = \MM_H(\theta_0)$ and is omitted here for brevity. 
\endproof

The lemma certifies membership of the tested Hankel $\est{h}(x,\theta)$ in the nominal manifold $\MM_H(\theta_0)$. To interpret non-membership as a parameter change, one additionally requires that deviations from $\MM_H(\theta_0)$ are attributed to changes in $\theta$, which is also assumed throughout the paper.

In practice, ${\zeta}_{\theta_0}(\est{h}(x,\theta_0))\approx 0$ due to a finite data length, measurement noise, approximation errors in the Koopman representation, and imperfect autoencoder training. Therefore, the decision between $\mathrm{H}_0$ and $\mathrm{H}_1$ is taken by comparing the norm of the residual to a threshold. For this purpose, let $\tau_{\theta_0}>0$ denote a threshold obtained on a nominal data and define $t \eqdef \left\| \zeta_{\theta_0}(\est{h}(x,\theta))\right\|_2$. Then, the test between the two hypotheses writes

\begin{align} \label{eq:finalhyptest}
\begin{array}{c}
  \mathrm{H}_0:\quad t < \tau_{\theta_0}  \\
  \mathrm{H}_1:\quad t \geq \tau_{\theta_0} . 
\end{array}
\end{align}

\begin{remark}
A robust hypothesis test should account for the uncertainty of the residual, including the effects of finite data, measurement noise, model approximation, and the uncertainty of the learned mappings and threshold. Developing such a statistical characterization is however beyond the scope of the current paper.
\end{remark} 

The algorithmic summary of the proposed approach is enclosed below.

\begin{algorithm}[H] \label{alg:imp}
	\caption{Fault detection on Hankel matrix manifolds.}  
	\SetKwInOut{Input}{Input}
    \SetKwInOut{Output}{Output}
    \Input{$J$ output training data sets and the output data for testing, number of time lags $p$, network parameters: hidden layer width, character space size $\ell$, learning rate, activation function type \;}
    \Output{test value $t$, decision about the fault\;}
    compute vectorized Hankel matrices $\{\est{h}^{(j)}(\theta_0)\}_{j=1}^J$ for $J$ training data sets after \eqref{eq:HankelVec} \;
    train autoencoder mappings $\{\varphi_{1,\theta_0},\varphi_{2,\theta_0},\psi_{1,\theta_0},\psi_{2,\theta_0}\}$ from the first training stage based on $\{\est{h}^{(j)}(\theta_0)\}_{j=1}^J$ after \eqref{eq:AEMAPS} and the chosen network parameters \;
    sample $\{a^{(i)}\}^{N_a}_{i=1}$ from a hyper-rectangle $\AAA$ around the nominal Hankel matrices \;
    train autoencoder mappings $\{\bar{\varphi}_{2,\theta_0},\bar{\psi}_{2,\theta_0}\}$ from the second training based on $\{a^{(i)}\}^{N_a}_{i=1}$ after \eqref{eq:AEMAPS2}  \;
    compute the threshold $\tau_{\theta_0}$ on $\{\est{h}^{(j)}(\theta_0)\}_{j=1}^J$, or preferably on other nominal calibration data \;
    for each vectorized Hankel matrix in a test state $\est{h}(\theta)$, compute $t$ and take decision based on $\tau_{\theta_0}$ \eqref{eq:finalhyptest}\;
\end{algorithm}


\section{Numerical validation}\label{sec:validation}

The approach proposed in Section \ref{sec:faultdet} is first validated on a numerical simulation of a nonlinear toy system. For this purpose, we consider a 6 degree of freedom (DOF) mechanical mass-spring system that (for any consistent set of units) is modeled with an initial spring stiffness $k_1 = k_3 = k_5 = 100$ and $k_2 = k_4 = k_6 = 200$, mass of each element $m_i = 1/20$ and a proportional damping matrix such that each mechanical mode has a damping ratio of $\zeta_i=3\%$. Two nonlinear analysis cases are considered; the first one where the spring nonlinearity is cubic and the second where it is a periodic function of systems' displacement. A detailed description of the nonlinear system and the data simulation is outlined in \ref{app:chainapp}. The system is excited with white noise inputs in all DOFs and sampled with a frequency of 500 Hz for 200 seconds. The acceleration outputs are collected at DOFs 1, 3 and 5. Standard Gaussian white noise with is added to the output sensors. 

\begin{figure}[ht!]
\centering
\includegraphics[width=0.75\linewidth]{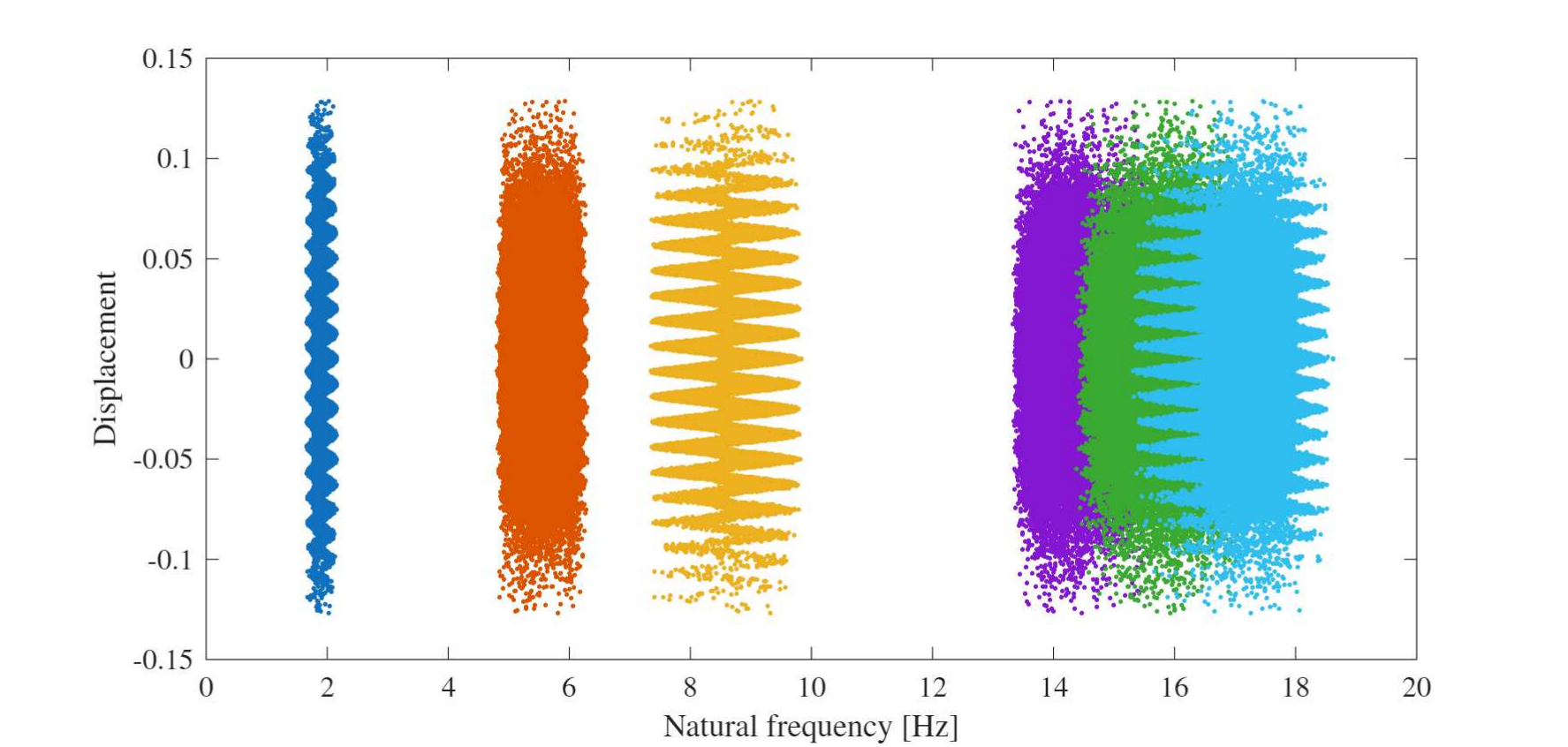}     
\caption{Displacement and the local natural frequency of the system with periodic nonlinearity.}
\label{fig:plotper}
\end{figure}

\begin{figure}[ht!]
\centering
\includegraphics[width=0.70\linewidth]{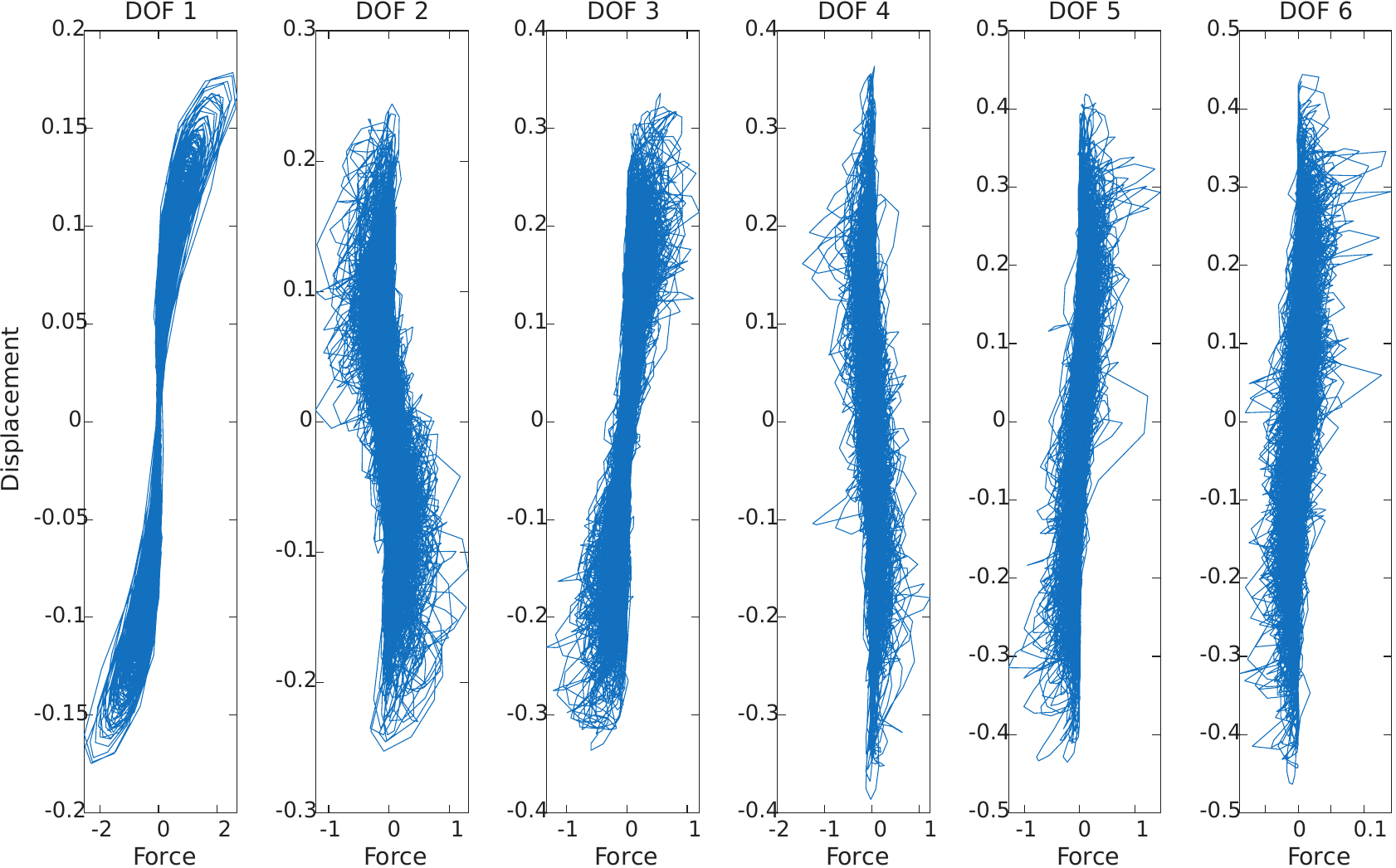}     
\caption{Displacement and restoring force of the system with cubic nonlinearity.}
\label{fig:plotcubic}
\end{figure}

Figure \ref{fig:plotper} illustrates a nonlinear relation between the natural frequency and displacement for the case of system with periodic stiffness and Figure \ref{fig:plotcubic} illustrates the relation between displacements and the restoring force for system with cubic nonlinearity. Both figures show that system parameters are a nonlinear functions of displacements.

The numerical Monte Carlo campaign consists of simulation of 100 data sets in a reference state of the system to estimate the baseline manifold, 1000 healthy data sets for the validation and 1000 data sets for each damage scenario. Two damage cases are considered; the first case where the fault is modeled as a gradual reduction in the stiffness of the second spring by 5\%, and the second where the spring stiffness is reduced by 10\%.

For all cases, the covariance Hankel matrices are computed with $p=15$ lags. The autoencoder uses two hidden layers of width $4096$ and a characteristic bottleneck dimension $\ell=20$. All hidden layers use a LeakyReLU activation with negative slope $0.1$, and the output layers are linear. The use of two hidden layers instead of one is an implementation choice intended to increase the approximation capacity of the encoder and decoder while keeping the architecture fully connected and symmetric. A single hidden layer network might lead to the same outcome, but may require a larger width to represent the same nonlinear transformation. 
Both training stages are optimized using a adaptive moment estimation (ADAM) with learning rate $10^{-4}$, mini-batch size $4$, and $1000$ training epochs. The model obtained after the final epoch is used for the computation of the residual scores. The weights converge with the loss of $0.022$ and $0.015$ respectively in the first and the second training stages with the computational time of 43\,s and 92\,s respectively on AMD Ryzen 7 5800H CPU with RTX3060 mobile GPU and 64GB of RAM. The corresponding algorithm was developed using PyTorch \cite{paszke2019pytorchimperativestylehighperformance}.
For both nonlinearity cases, the detection threshold is chosen heuristically as $95\%$ quantile of the empirical distribution of the test obtained on the reference data. 
The distributions of the test obtained for different damage levels and for two nonlinearity cases are shown in Figure \ref{fig:results_chain}. It can be viewed that the considered faults are well-detected for both nonlinearity types. The false positive rate is $7\%$ and $5\%$ respectively for the periodic and cubic nonlinearities and the false negative rate is $0\%$ and $5.3\%$  for the same cases. 
\begin{figure}[t!] 
\begin{minipage}[b]{0.49\linewidth}
\center
\includegraphics[width=0.98\textwidth]{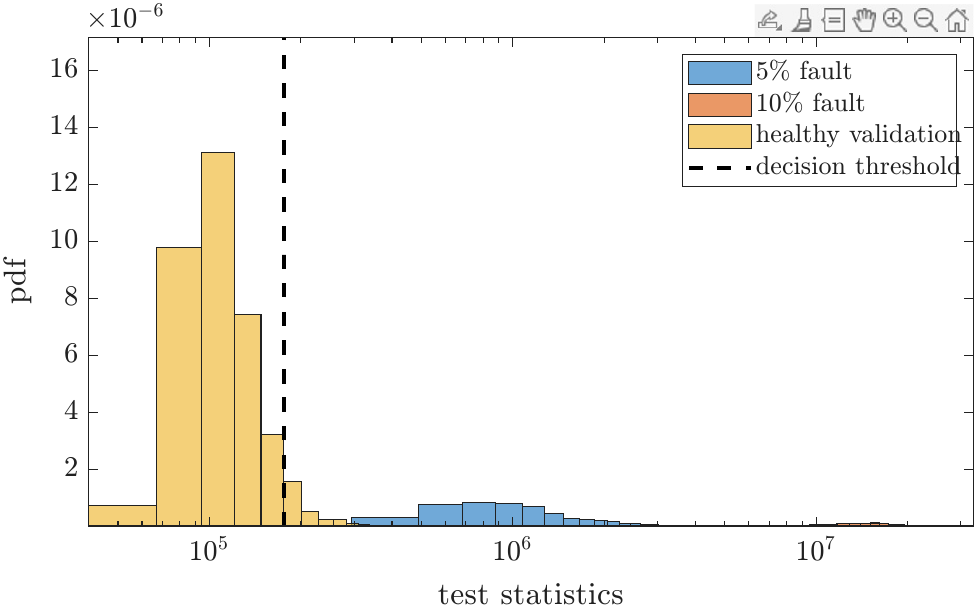}
\end{minipage}
\begin{minipage}[b]{0.49\linewidth}
\center
\includegraphics[width=0.98\textwidth]{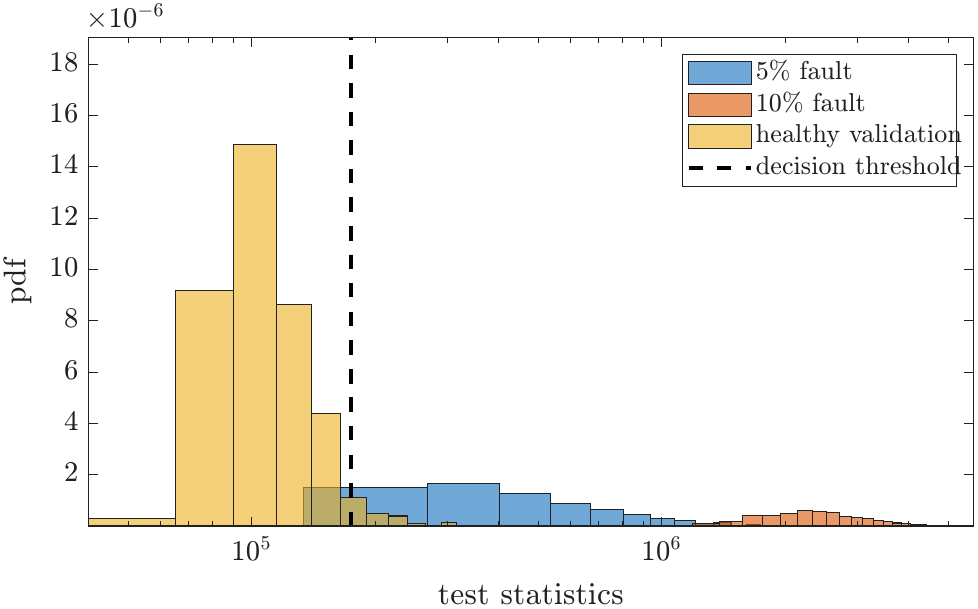}
\end{minipage}
\caption{Distributions of the residual norm for a system with periodic nonlinearity (left) and cubic nonlinearity (right).} 
\label{fig:results_chain}
\end{figure}
To showcase the performance of the proposed approach, we benchmark it against three autoencoder-based change detection methods: a fully connected autoencoder (AE), a variational autoencoder (VAE), and a beta-VAE (BVAE), implemented in state-of-the-art outlier detection package PyOD \citep{hinton2006reducing,kingma2013auto,zhao2019pyod,higgins2017beta}. The reconstruction error is used as the test statistic and the decision threshold is obtained based on the empirical 99\% quantile of the nominal test scores. All models were fitted only to the reference data sets. The variants denoted by AE 1, VAE 1 and BVAE 1 were trained on vectorized covariance Hankel matrices $\est{h}=\vecc(\est{\HH})$, while AE 2, VAE 2 and BVAE 2 were trained directly on the vectorized outputs. The data from the Monte Carlo simulation of a nonlinear chain system were used for the comparison. The weights were optimized with ADAM with the learning rate $10^{-5}$ and the hidden layer uses ReLU activation functions. The training parameters were $1000$ epochs, mini-batch size $32$, hidden layer size $1024$, bottleneck size $24$. The comparison with the proposed dual autoencoder, denoted as DAE, is illustrated in Table \ref{table:chain_valid}. The graphical illustration of the test results from AE 1 and VAE 1 obtained on data from a system with cubic nonlinearity is depicted in Figure \ref{fig:results_chainVAE}. 

\begin{table}[t!]
\caption{Comparison between the state-of-the-art autoencoder-based anomaly detection methods (AE, VAE, BVAE) and the proposed approach (DAE).}
\vspace{2mm}
\footnotesize
\centering
\begin{tabular}{ll*{7}{c}}
\toprule
{Nonlinearity} & {Metric} & {AE 1} & {AE 2} & {VAE 1} & {VAE 2} & {BVAE 1} & {BVAE 2} & {DAE} \\
\midrule
\multirow{2}{*}{Cubic}
& {False alarm rate} & {$12.5\%$} & {$68.4\%$} & {$2.1\%$} & {$50.4\%$} & {$2.0\%$} & {$50.2\%$} & {$5.0\%$} \\
& {Detection rate}   & {$100\%$}  & {$52.4\%$} & {$75.0\%$} & {$51.2\%$} & {$75.7\%$} & {$51.0\%$} & {$94.7\%$} \\
\midrule
\multirow{2}{*}{Periodic}
& {False alarm rate} & {$19.9\%$} & {$79.0\%$} & {$0.6\%$} & {$68.7\%$} & {$0.5\%$} & {$68.9\%$} & {$7.0\%$} \\
& {Detection rate}   & {$100\%$}  & {$69.0\%$} & {$81.0\%$} & {$52.4\%$} & {$81.5\%$} & {$54.2\%$} & {$100\%$} \\
\bottomrule
\end{tabular}
\label{table:chain_valid}
\end{table}

\begin{figure}[t!] 
\begin{minipage}[b]{0.49\linewidth}
\center
\includegraphics[width=0.98\textwidth]{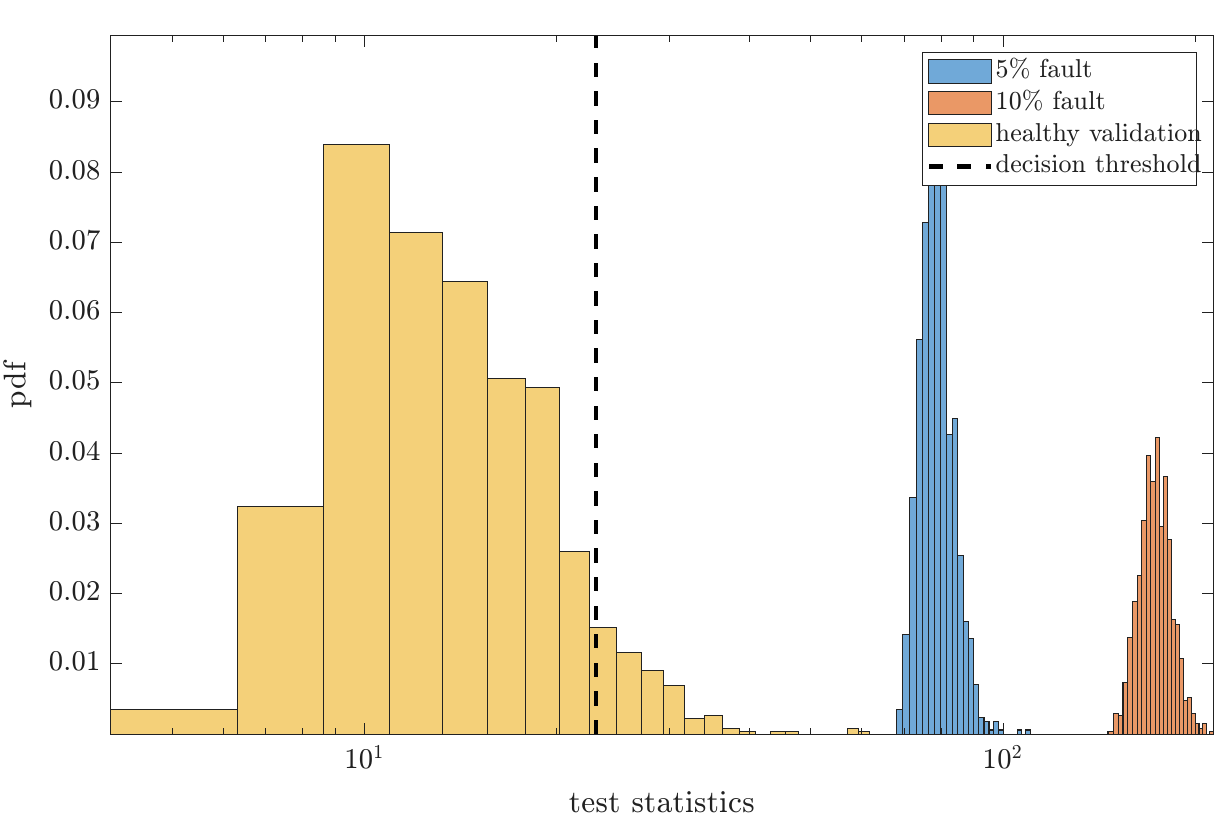}
\end{minipage}
\begin{minipage}[b]{0.49\linewidth}
\center
\includegraphics[width=0.98\textwidth]{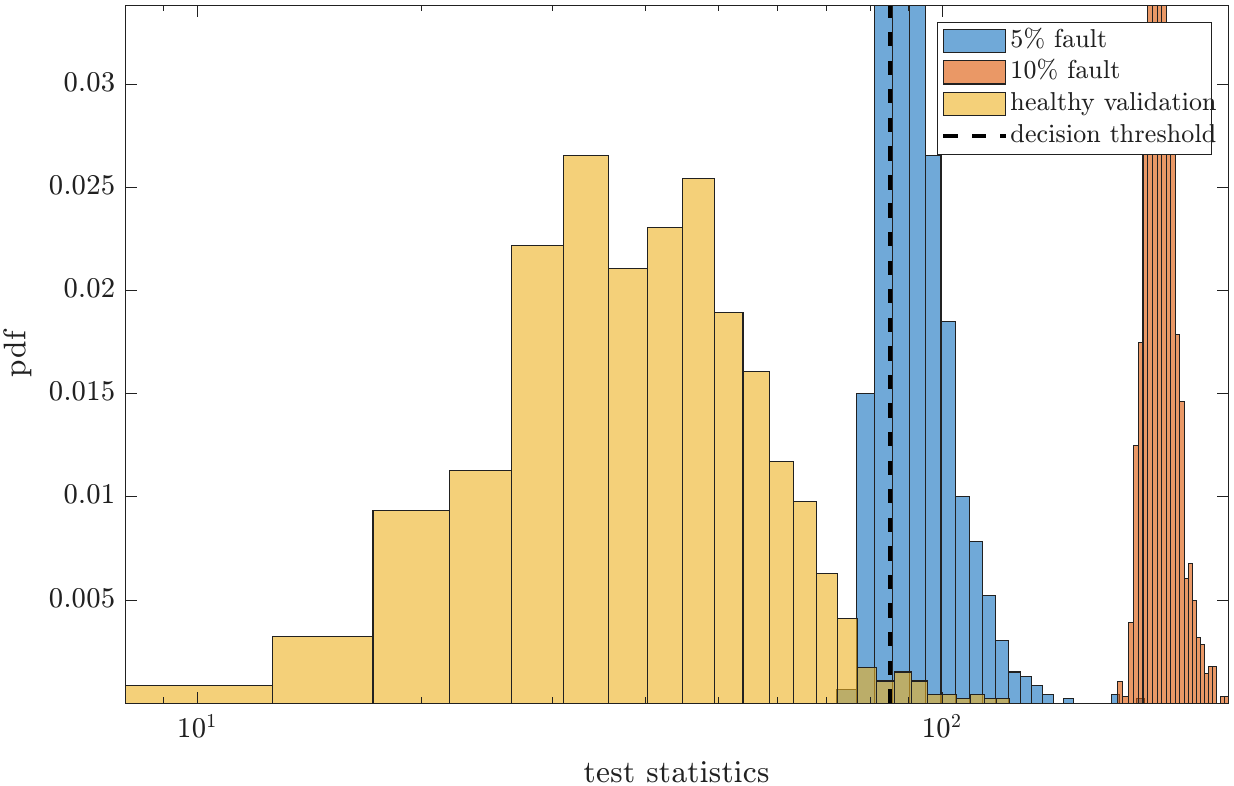}
\end{minipage}
\caption{Distributions of the standard autoencoder test results for a system with cubic nonlinearity. AE 1 (left), VAE 1 (right).} 
\label{fig:results_chainVAE}
\end{figure}

It can be viewed that the results obtained based on learning Hankel matrices of delayed covariances are superior to the ones when using time series directly. While the variational autoencoder and its beta variant offer better false alarm rate in comparison with the proposed approach, the obtained detection rates are much lower. In contrast, a standard autoencoder offers slightly better detection rates than the proposed method, with high false alarm rate as a trade off. Overall, on average, the proposed change detection framework outperforms the applied methods.


\section{Application}\label{sec:application}

In this section the proposed approach is applied on two well-known benchmarks: S101 bridge \cite{michaelS101} and V27 wind turbine blade \cite{AVENDANOVALENCIA2020106686}. The purpose of this application is to empirically illustrate that the proposed method can be adopted to data sets with a realistic number of sensors and its performance on real data is at least as good as the state of art methods in the literature.   

\subsection{S101 bridge}\label{sec:s101}

First, the proposed approach is evaluated using measurement data from the monitoring campaign of the S101 bridge in Austria, shown in Figure~\ref{fig:s101_pic}. The campaign was conducted prior to the demolition of the bridge, during which several controlled damage scenarios were introduced into the structure.  

\begin{figure}[ht!]
\centering
\includegraphics[height=4.05cm]{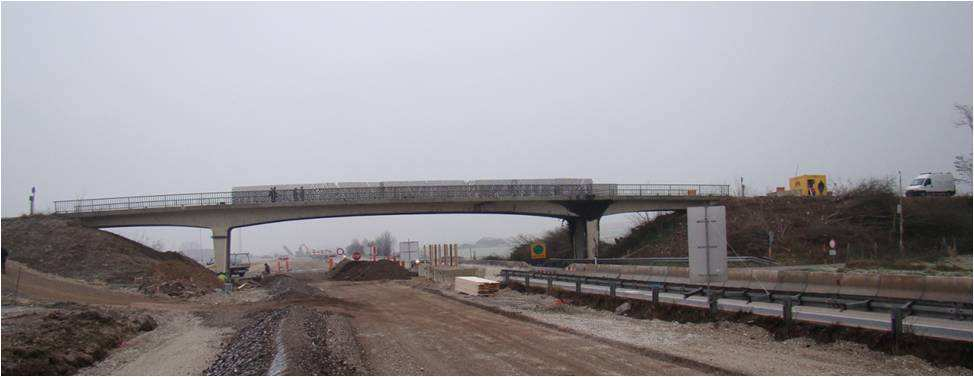}     
\caption{S101 Bridge before demolition.}
\label{fig:s101_pic}
\end{figure}

The dynamic response of the bridge under ambient excitation was recorded using 15 acceleration sensors mounted on the bridge deck. The measurements were sampled at a frequency of $500$\,Hz over a period of three days. During the monitoring campaign, the structure was exposed to environmental variability, as well as artificially introduced damage. Several controlled damage scenarios were considered. A detailed description of the monitoring campaign can be found in~\cite{michaelS101}. The damage scenarios and the corresponding data sets used in this study are summarized in Table~\ref{tab:s101_datasets}.

Data are first low-pass filtered and subsampled to frequency of $25$\,Hz. Out of $150$ healthy data sets available, the first $100$ data sets are used to estimate the reference manifold and the remaining $50$ are used for validation. The covariance Hankel matrices are computed with $p=15$ lags. The autoencoder network is trained with the hidden layer width of $4096$ and a characteristic bottleneck dimension $\ell=20$. The training and autoencoder parameters are chosen similarly to the ones described in Section \ref{sec:validation}.

\begin{table}[ht!]
\centering
\caption{Damage scenarios during the progressive damage test of the S101 Bridge.}
\label{tab:s101_datasets}
\begin{tabular}{clcclc}
\hline
\begin{tabular}[c]{@{}c@{}}Damage\\case 1\end{tabular}
& Damages
& Sets
& \begin{tabular}[c]{@{}c@{}}Damage\\case 2\end{tabular}
& Damages
& Sets \\
\hline
A & First cut, left pier & 5
  & H & 1\textsuperscript{st} tendon cut & 20 \\

B & Second cut, left pier & 15
  & I & 2\textsuperscript{nd} tendon cut & 178 \\

C & Settlement of the left pier, 1\,cm & 10
  & J & 3\textsuperscript{rd} tendon cut & 23 \\

D & Settlement of the left pier, 2\,cm & 21
  & K & 4\textsuperscript{th} tendon cut & 6 \\

E & Settlement of the left pier, 3\,cm & 9
  &   &  &  \\

F & Lifting the left pier +6\,mm & 186
  &   &  &  \\

G & Strengthening the left pier& 45
  &   &  &  \\
\hline
\end{tabular}
\end{table}

Figure \ref{fig:s101results} illustrates that all damage types are detected while the false alarm rate is reasonably low, i.e., at $10\%$. This offers a similar performance as the subspace-based fault detection approach reported in \cite{michaelS101,gres_eurodyn}. 

\begin{figure}[ht!]
\centering
\includegraphics[width=0.75\linewidth]{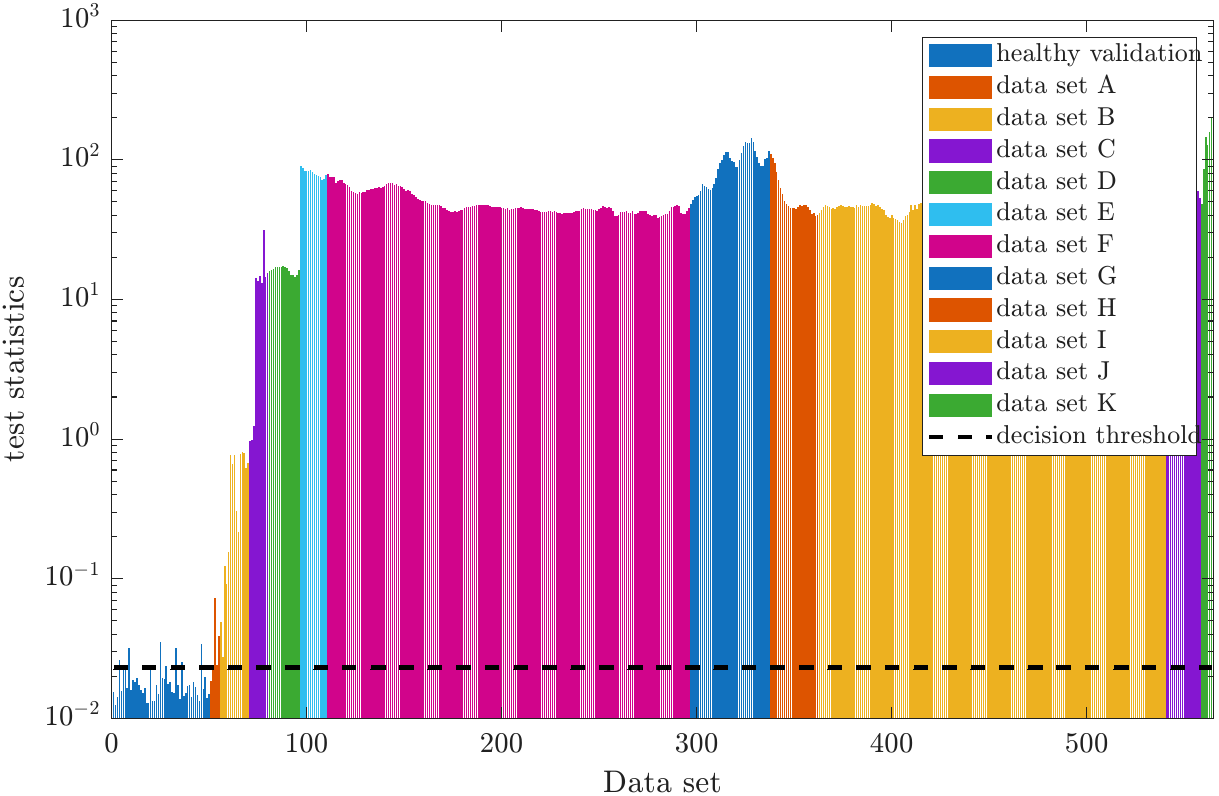}     
\caption{Residual norm obtained for the S101 bridge data sets.}
\label{fig:s101results}
\end{figure}

Subsequently, the proposed method is compared to the autoencoder-based methods benchmarked in Section \ref{sec:validation}. The training procedure follows the same protocol as in the case of the proposed approach and the autoencoders are applied with similar parameters as in Section \ref{sec:validation}. The results are depicted in Table \ref{table:s101_valid}. It can be viewed that the proposed method is outperformed by the standard autoencoder-based anomaly detection when it comes to false alarms, however, for this dataset, it is slightly more sensitive to damage. 

\begin{table}[t!]
\caption{Comparison between the state-of-the-art autoencoder-based anomaly detection methods (AE, VAE, BVAE) and the proposed approach (DAE).}
\vspace{2mm}
\footnotesize
\centering
\begin{tabular}{c*{8}{c}c}
\toprule
{Method} &  {AE 1} &  {AE 2} & {VAE 1} &  {VAE 2} & {BVAE 1} & {BVAE 2} & DAE \\
\toprule
{False alarm rate}           &  {$1\%$}    & {$0\%$}    &{$6.0\%$} &{$0\%$} & {$4.0\%$}& {$0\%$} &  {$10\%$}     \\
{Detection rate}          &  {$97.3\%$}    & {$ 53.1\%$}    &{$74.0\%$} &{$ 51.6\%$} & {$75.5\%$}& {$ 51.6\%$} &  {$99\%$}  \\
\bottomrule
\end{tabular}
\label{table:s101_valid}
\end{table}

\subsection{V27 wind turbine blade}\label{sec:v27}

This section is dedicated to the application of the proposed algorithm on data collected from a wind turbine blade. 
The study is conducted on an experimental benchmark of a Vestas V27 wind turbine, which was analyzed previously for vibration-based damage diagnosis in \cite{UlriksenDDv27_15,AVENDANOVALENCIA2020106686}. In contrast to the previous work, herein the aim is to illustrate that the low frequency data is sensitive to faults and without explicitly accounting for the variable environmental and operating conditions of the wind turbine. 

The blade is 27 m long and is a part of three blade 225kW Vestas wind turbine. 
The blade is subjected to a mix of a random wind excitation and forced impulse inputs, where the latter is induced by an actuator fixed on a surface close to the root of the blade; for more details about the experimental setup see \cite{tcherniak2017active}. 

\begin{figure}[ht!]
\begin{minipage}[b]{0.49\linewidth}
\centering
\includegraphics[width=0.95\linewidth]{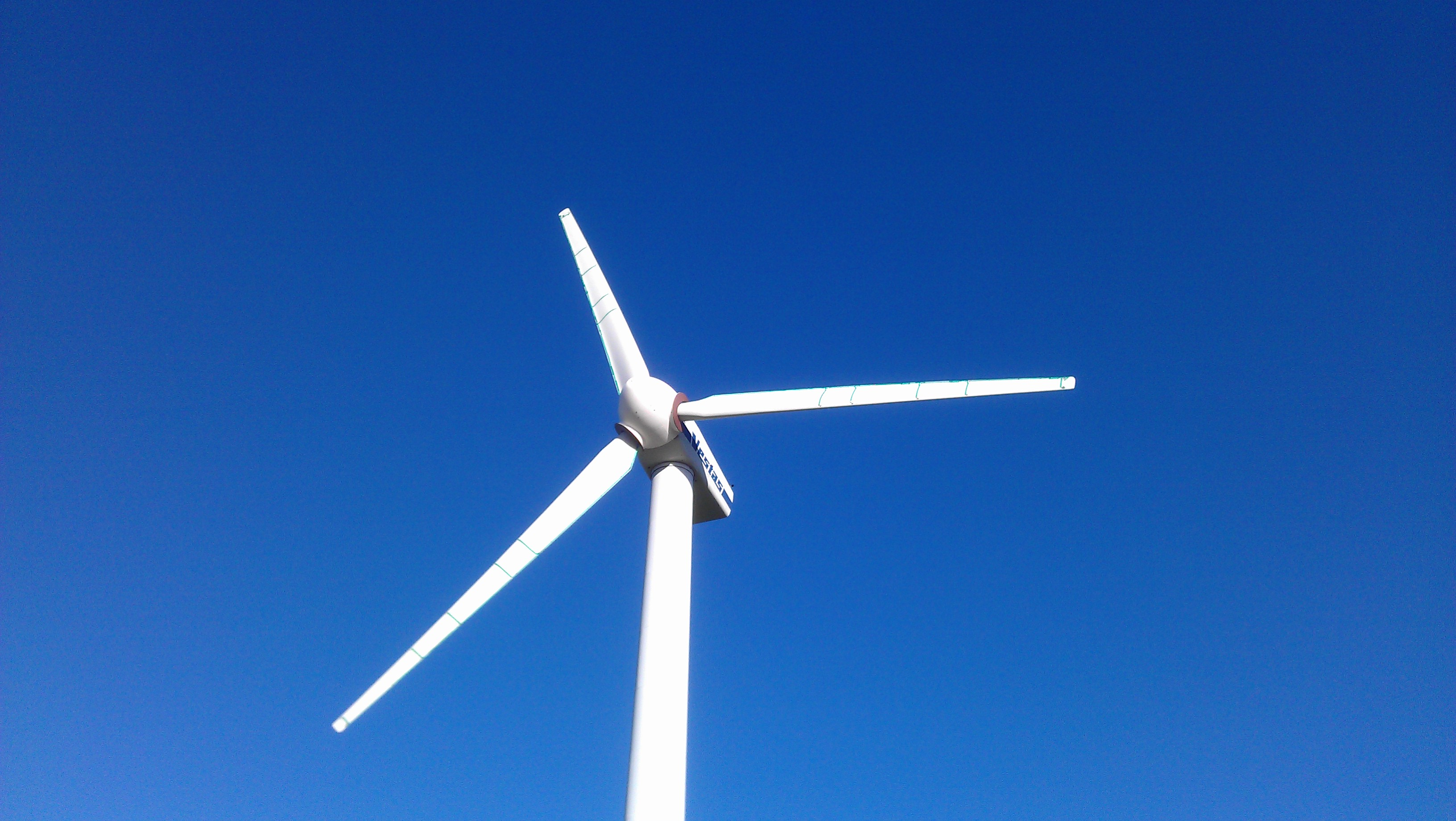}     
\end{minipage}
\begin{minipage}[b]{0.49\linewidth}
\centering
\includegraphics[width=0.95\linewidth]{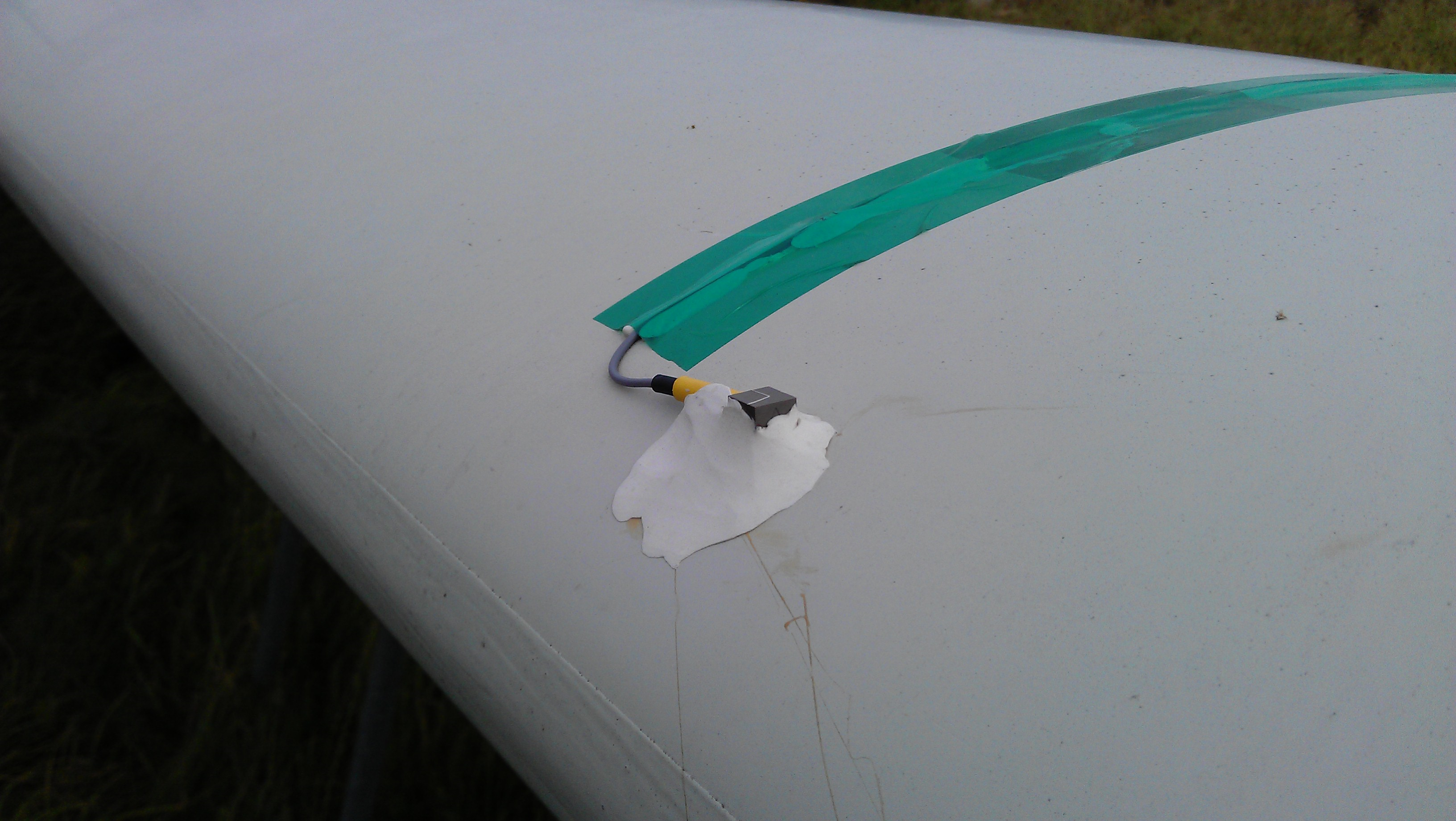} 
\end{minipage}
\caption{Vestas V27 wind turbine (left). V27 blade with an accelerometer attached (right). }
\label{fig:expblade}

\centering
\includegraphics[width=0.95\linewidth]{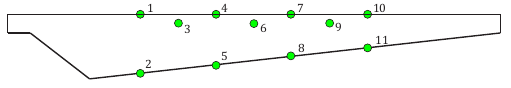}     
\caption{Sketch of V27 blade with sensors. }
\label{fig:labblade}
\end{figure}

The measurements are conducted with a sampling frequency of 16384 Hz and are collected for a period of 3.5 months with 11 accelerometer sensors. Prior to the computation of Hankel matrices data are low-pass filtered and subsampled to $20.825$\,Hz. The experimental setup and the geometry of the blade are shown in Figures \ref{fig:expblade} and \ref{fig:labblade}. During the experiment, artificial damage was progressively introduced by opening the trailing edge of the blade. The initial opening length was $15$\,cm, which was subsequently extended to $30$\,cm and finally to $45$\,cm. Due to the limited computational resources, we have selected 19 measurement sets for the $15$\,cm crack opening, 128 sets for the $30$\,cm opening, and 260 sets for the $45$\,cm opening. The turbine was operated under varying weather conditions during the measurements, as shown in Figure~\ref{fig:Vblades_weather}. The wind turbine also operates in three rotor speed ranges: idle, $32$\,rpm and $42$\,rpm, from which only the data sets from $32$\,rpm and $42$\,rpm are considered. Further details on the wind turbine installation, sensor layout, data processing, and environmental conditions can be found in~\cite{tcherniak2017active,AVENDANOVALENCIA2020106686}.

\begin{figure}[ht!]
\centering
\includegraphics[width=0.65\linewidth]{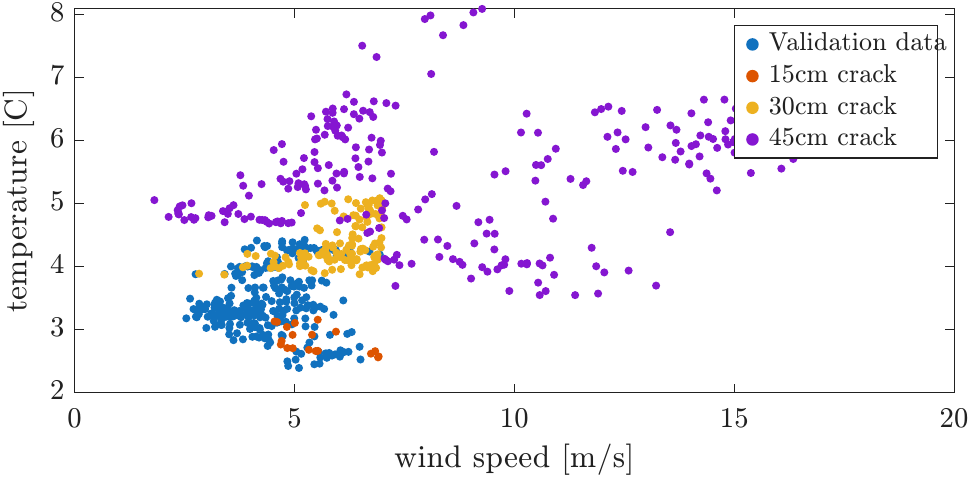}     
\caption{Wind speed and temperature conditions for the selected data sets.}
\label{fig:Vblades_weather}
\end{figure}

In total $398$ healthy data sets were chosen, out of which $108$ data sets were used to estimate the reference manifold and the remaining $290$ were used for validation. The covariance Hankel matrices are computed with $p=25$ lags. The autoencoder network is trained with two hidden layers width of $8192$ and a characteristic bottleneck dimension $\ell=14$. The training and autoencoder parameters are chosen similarly to the ones described in Section \ref{sec:validation}.

The test value corresponding to the analyzed data sets are illustrated in Figure \ref{fig:v27results}. Although each damage scenario is detected, the false alarm rate is $14\%$ and no-detection rate is $8\%$. While the obtained rates are slightly high, the performance of the proposed method on low-frequency data is very competitive to the current results in \cite{AVENDANOVALENCIA2020106686}. The results are expected to improve when the changes in the wind turbine operating parameters and estimation uncertainty are accounted in the design of the fault detection test, which will be considered in our future work.  

\begin{figure}[ht!]
\centering
\includegraphics[width=0.65\linewidth]{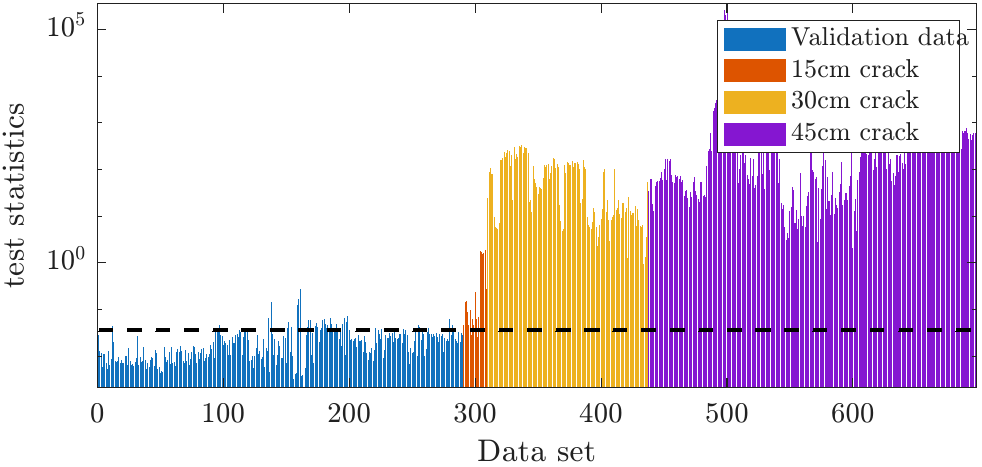}     
\caption{Residual norm obtained for the V27 wind turbine blade data sets.}
\label{fig:v27results}
\end{figure}

Lastly, the proposed method is compared to the autoencoder-based methods used in the previous section. The training procedure follows the same protocol as in the case of the proposed approach and the autoencoders are applied with similar parameters as in Section \ref{sec:validation}. The results are depicted in Table \ref{table:v27_valid}. It can be viewed that the proposed method outperforms the standard autoencoder-based approaches for both false alarm and fault detection rates. 

\begin{table}[t!]
\caption{Comparison between the state-of-the-art autoencoder-based anomaly detection methods and the proposed approach.}
\vspace{2mm}
\footnotesize
\centering
\begin{tabular}{l*{7}{c}}
\toprule
{Metric} & {AE 1} & {AE 2} & {VAE 1} & {VAE 2} & {BVAE 1} & {BVAE 2} & {DAE} \\
\midrule
{False alarm rate} 
& {$30.7\%$} 
& {$0.0\%$} 
& {$23.9\%$} 
& {$0.0\%$} 
& {$28.1\%$} 
& {$0.0\%$} 
& {$14\%$} \\
{Detection rate} 
& {$100.0\%$} 
& {$55.5\%$} 
& {$99.3\%$} 
& {$38.6\%$} 
& {$99.8\%$} 
& {$48.2\%$} 
& {$92\%$} \\
\bottomrule
\end{tabular}
\label{table:v27_valid}
\end{table}


\section{Conclusions}\label{sec:conc}

In this paper a fault detection framework has been developed for nonlinear dynamical systems by interpreting delay output covariance Hankel matrices as samples from a manifold parametrized by stochastic Koopman features. A dual autoencoder was used to learn the nominal manifold and complementary coordinates, whose norm defines a residual for detecting parametric changes. The proposed method was validated on simulations of a nonlinear chain system and applied to the S101 bridge and V27 wind turbine blade benchmarks, where it achieved a favorable balance between false alarm and detection rates compared with standard autoencoder-based approaches. Future work will address the statistical characterization of the residual under finite data, measurement noise and operating variability.

\appendix
\section{Proof of Proposition \ref{prop:hankel_covariance_koopman} } \label{app:proofHcovKop}

Recall that from the stochastic Koopman eigenfunction relation in Section \ref{sec:stochasticKoopman} we have
\[
(\KK\phi_j)(x,\theta) = \E\left[\phi_j(X_{k+1},\theta)\mid X_k=x\right] = \lambda_j(\theta)\phi_j(x,\theta).
\]
Applying this relation recursively for any integer $\ell\geq 1$ gives $\E\left[\phi_j(X_{k+\ell},\theta)\mid X_k=x\right] = \left(\lambda_j(\theta)\right)^\ell\phi_j(x,\theta)$ and stacking the retained eigenfunctions yields $\E\left[\phi(X_{k+\ell},\theta)\mid X_k=x\right] = \Lambda(\theta)^\ell\phi(x,\theta)$. 
When inspecting an individual realization we define a prediction error term $\eta_{k,\ell} \eqdef \phi(x_{k+\ell},\theta) - \Lambda(\theta)^\ell\phi(x_k,\theta)$ such that
\[
\phi(x_{k+\ell},\theta) = \Lambda(\theta)^\ell\phi(x_k,\theta) + \eta_{k,\ell}.
\]
Based on the above, the conditional expectation of the error $\eta_{k,\ell}$ satisfies 
\[
\E\left[\phi(X_{k+\ell},\theta)- \Lambda(\theta)^\ell\phi(X_k,\theta) \mid X_k=x\right] = 0.
\] Therefore, using \eqref{eq:ykstoch}, the measured output realization can be written as
\[
y_{k+\ell} = V(\theta)\Lambda(\theta)^\ell\phi(x_k,\theta) + V(\theta)\eta_{k,\ell} + \varepsilon_n(x_{k+\ell},\theta) + v_{k+\ell},
\]
where subscript $n$ indicates that the residual error corresponds to the truncation to the first $n$ Koopman components.
Subsequently, using Definition \ref{def:datamatrix_def} and \eqref{eq:ykstoch} recall that
$\YY^-=\bar{\YY}^-+\RRR^-+\VV^-$ and $\YY^+=\bar{\YY}^++\RRR^++\VV^+$. Expanding these product gives
\begin{align*}
\YY^+{\YY^-}^{\top} &= \bar{\YY}^+ (\bar{\YY}^-)^{\top} +\bar{\YY}^+ {\RRR^-}^{\top} + \bar{\YY}^+ {\VV^-}^{\top} + \RRR^+ (\bar{\YY}^-)^{\top} + \RRR^+ {\RRR^-}^{\top} + \RRR^+ {\VV^-}^{\top} + \\ &\VV^+ (\bar{\YY}^-)^{\top} + \VV^+ {\RRR^-}^{\top} + \VV^+ {\VV^-}^{\top} .
\end{align*}
In the remainder of this section we will analyze these products one by one. 

First, consider the term $\bar{\YY}^+ (\bar{\YY}^-)^{\top}$. Let $a,b\in\{0,\ldots,p-1\}$ denote the future and past block-row indices. The $(a,b)$ block of $\bar{\YY}^+ (\bar{\YY}^-)^{\top}$ is
\[
\left[\bar{\YY}^+ (\bar{\YY}^-)^{\top}\right]_{a,b} = \frac{1}{N}\sum_{k=0}^{N-1} V(\theta)\phi(x_{k+p+a},\theta) \phi(x_{k+b},\theta)^\H V(\theta)^\H .
\]
Let $\ell=p+a-b$. Since $a,b\in\{0,\ldots,p-1\}$, one has $\ell\geq 1$, and
\[
\phi(x_{k+p+a},\theta)=\phi(x_{k+b+\ell},\theta)=\Lambda(\theta)^\ell\phi(x_{k+b},\theta)+\eta_{k+b,\ell}.
\]
Substitution gives
\begin{align*}
\left[\bar{\YY}^+ (\bar{\YY}^-)^{\top}\right]_{a,b}&=\frac{1}{N}\sum_{k=0}^{N-1} V(\theta) \Lambda(\theta)^\ell \phi(x_{k+b},\theta) \phi(x_{k+b},\theta)^\H V(\theta)^\H
+ \frac{1}{N}\sum_{k=0}^{N-1} V(\theta)\eta_{k+b,\ell} \phi(x_{k+b},\theta)^\H V(\theta)^\H .
\end{align*}
For a fixed $p$, it can be shown that $\frac{1}{N}\sum_{k=0}^{N-1} \phi(x_{k+b},\theta) \phi(x_{k+b},\theta)^\H = G(x,\theta) + o(1)$. 
Furthermore, using the stochastic Koopman relation
\[
\begin{aligned}
&\E\left[\eta_{k+b,\ell}\phi(X_{k+b},\theta)^\H \mid X_{k+b}=x \right] = \left(\E\left[\phi(X_{k+b+\ell},\theta)\mid X_{k+b}=x\right] - \Lambda(\theta)^\ell\phi(x,\theta) \right)\phi(x,\theta)^\H =0 .
\end{aligned}
\]
Hence the sequence $\eta_{k+b,\ell}\phi(X_{k+b},\theta)^\H$ has zero mean and since for a fixed $b$ and $\ell$, it is a finite-lag function of the stationary ergodic Markov chain, it is also stationary and ergodic. Therefore $\frac{1}{N}\sum_{k=0}^{N-1} \eta_{k+b,\ell}\phi(x_{k+b},\theta)^\H = o(1)$. In consequence
\[
\left[\bar{\YY}^+ (\bar{\YY}^-)^{\top}\right]_{a,b}= V(\theta) \Lambda(\theta)^{p+a-b} G(x,\theta) V(\theta)^\H + o(1).
\]
Since $V(\theta) \Lambda(\theta)^{p+a-b}=V(\theta) \Lambda(\theta)^a \Lambda(\theta)^{p-b}$, stacking all block rows and columns gives
\[
\bar{\YY}^+ (\bar{\YY}^-)^{\top}=\OO(\theta)\Gamma(x,\theta)+o(1).
\]

Now, what remains is to bound the terms containing the truncation residual and the measurement noise. Recall that $n$ (the number of Koopman modes) is chosen such that on the region of interest $\|\varepsilon_n(x,\theta)\|\leq \bar{\varepsilon}$, where $\bar{\varepsilon}>0$ is assumed to be small. Then
$\|\RRR^-\|^2 = \frac{1}{N}\sum_{i=0}^{p-1} \sum_{k=0}^{N-1}\|\varepsilon_n(x_{k+i},\theta)\|^2 \leq p\bar{\varepsilon}^2 $ and
$\|\RRR^+\|^2 = \frac{1}{N}\sum_{i=p}^{2p-1} \sum_{k=0}^{N-1} \|\varepsilon_n(x_{k+i},\theta)\|^2 \leq p\bar{\varepsilon}^2 $. 
Thus $\|\RRR^-\|\leq \sqrt{p}\bar{\varepsilon}$ and $\|\RRR^+\|\leq \sqrt{p}\bar{\varepsilon}$.
As such, it can be shown that $\|\bar{\YY}^+{\RRR^-}^{\top}\| \leq \|\bar{\YY}^+\|\|\RRR^-\| \leq \sqrt{p}\bar{\varepsilon}\|\bar{\YY}^+\|$, $\|\RRR^+ (\bar{\YY}^-)^{\top}\| \leq \|\RRR^+\|\|\bar{\YY}^-\|\leq \sqrt{p}\bar{\varepsilon}\|\bar{\YY}^-\|$ and $\|\RRR^+{\RRR^-}^{\top}\| \leq \|\RRR^+\|\|\RRR^-\| \leq p\bar{\varepsilon}^2$.
The mixed terms related to the residual and the noise satisfy $\|\RRR^+{\VV^-}^{\top}\| \leq \|\RRR^+\|\|\VV^-\| \leq \sqrt{p}\bar{\varepsilon}\|\VV^-\|$ and $ \|\VV^+{\RRR^-}^{\top}\| \leq \|\VV^+\|\|\RRR^-\| \leq \sqrt{p}\bar{\varepsilon}\|\VV^+\|$. 

For a fixed $p$, the empirical norms $\|\bar{\YY}^-\|$ and $\|\bar{\YY}^+\|$ are bounded in probability by the bounded fourth-moment assumption, and $\|\VV^+\|$ with $\|\VV^-\|$ are bounded in probability because the measurement noise has finite covariance. Thus all terms containing $\RRR^+$ or $\RRR^-$ are small when $\bar{\varepsilon}$ is small, i.e., the finite Koopman spectrum is a sufficiently good approximation.

Finally, because the measurement noise is zero mean, white, and independent of the state process, and because $p+a-b\geq 1$ for all $a,b\in\{0,\ldots,p-1\}$ the expectations $\E\left[ v_{k+p+a}v_{k+b}^\top \right] = 0$, $\E\left[v_{k+p+a} \phi(X_{k+b},\theta)^\H\right]=0$ and $\E\left[ \phi(X_{k+p+a},\theta)v_{k+b}^\top \right]=0$. As a result $\VV^+{\VV^-}^{\top}=o(1)$, $\VV^+ (\bar{\YY}^-)^{\top}=o(1)$ and $\bar{\YY}^+ {\VV^-}^{\top}=o(1)$. Combining these relations gives
\[
\YY^+{\YY^-}^{\top} = \OO(\theta)\Gamma(x,\theta) + O(\bar{\varepsilon}) + O(\bar{\varepsilon}^2) + o(1).
\]
Thus, for a sufficiently small truncation residual and sufficiently large $N$
\[
\est{\HH}(x,\theta) \approx \est{\OO}(\theta)\est{\Gamma}(x,\theta),
\]
where $\est{\OO}(\theta)$ and $\est{\Gamma}(x,\theta)$ denote the estimates of the exact quantities.

\section{Proof of Proposition~\ref{prop:hankel_covariance_manifold}}
\label{app:proof_hankel_covariance_manifold}

For this proof, $x\in\mathcal X_N$ denotes the finite vector obtained by stacking all states required to construct the past and future data matrices in Definition~\ref{def:datamatrix_def}. Since the retained stochastic Koopman eigenfunctions are smooth, every entry of
\[
G(x,\theta)=\frac{1}{N}\sum_{k=0}^{N-1}\varphi(x_k,\theta)\varphi(x_k,\theta)^H
\]
is a finite sum of products of smooth functions and therefore is also smooth with respect to $x$.

For a fixed $\theta$, the matrices $V(\theta)$, $\Lambda(\theta)$, and $B(\theta)$ do not depend on $x$. It follows from the definitions of $\Gamma(x,\theta)$ and $g(x,\theta)$ that these quantities are smooth with respect to $x$. Consequently, \eqref{eq:HankelVec} shows that $x\mapsto\est{h}(x,\theta)$ is a smooth map.
Then, for every $x_0\in\mathcal X_N$, the image of a sufficiently small neighborhood of $x_0$ under this map is a smooth embedded manifold in $\mathbb R^{(pr)^2}$. Its dimension is equal to the constant rank of the map in that neighborhood. Therefore, vectorized covariance estimates obtained from \eqref{eq:HankelVec} from finite realizations within the same local region may be interpreted as samples from the same local manifold.

\section{Description of the chain system} \label{app:chainapp}

\subsection{Periodic stiffness}

The vibration behavior of nonlinear mechanical system with periodic stiffness coefficients can be described by the differential equation
\beq\label{eq:modelznl}
\MM \ddot{z}(t) + \CC \dot{z}(t) + \KK(z(t)) z(t) = f(t) .
\enq 
The nonlinear stiffness matrix $\KK(z(t))$ is defined as
\begin{align*}
\KK(z(t)) \eqdef \vect{\bar{k}_1 + \bar{k}_2 & -\bar{k}_2 & 0 & 0 &  0 & 0 \\ -\bar{k}_2 & \bar{k}_2+\bar{k}_3 & -\bar{k}_3 & 0 &  0 & 0 \\ 0 & -\bar{k}_3 & \bar{k}_3+\bar{k}_4 & -\bar{k}_4 &  0 & 0 \\ 0 & 0 & -\bar{k}_4 & \bar{k}_4+\bar{k}_5 & -\bar{k}_5 & 0 \\ 0 & 0 & 0 & -\bar{k}_5 & k_5+\bar{k}_6 & -\bar{k}_6 \\ 0 & 0 & 0 & 0 & -\bar{k}_6 & \bar{k}_6}
\end{align*}
where for $i=1\hdots6$ $\bar{k}_i = k_i + a\cos( b z_i(t))$, where $a = 10$, $b = 50$ and $z_i(t)$ denotes the $i$'th entry of the displacement vector. System \eqref{eq:modelznl} can be transformed into the discrete-time model \eqref{eq:statenl}--\eqref{eq:outputnl} using numerical integration methods, e.g., fourth-order Runge Kutta.   

\subsection{Duffing spring stiffness}

The vibration behavior of the nonlinear mechanical system with Duffing springs can be described by
\beq\label{eq-modelznl}
\MM \ddot{z}(t) + \CC \dot{z}(t) + \KK z(t) + g(z(t)) = f(t) ,
\enq
where $g(z(t))$ is a nonlinear restoring force, which is shown to be a cubic function of displacements below. Let the restoring force in each spring be
\[
s_i(t) = k_i \delta_i(t) + \gamma_i \delta_i^3(t),
\]
where $k_i$ is the linear stiffness, $\gamma_i$ is the cubic Duffing stiffness coefficient and $\delta_i(t) = z_i(t)-z_{i-1}(t)$ denotes the elongation of the $i$th spring. The nonlinear contribution of spring $i$ is thus
\[
s_i^{\mathrm{nl}}(t)=\gamma_i \delta_i^3(t).
\]
The nonlinear force vector is then obtained by assembling all cubic spring forces as
\[
g(z(t)) =
\vect{
s_1^{\mathrm{nl}}(t)-s_2^{\mathrm{nl}}(t)\\
s_2^{\mathrm{nl}}(t)-s_3^{\mathrm{nl}}(t)\\
s_3^{\mathrm{nl}}(t)-s_4^{\mathrm{nl}}(t)\\
s_4^{\mathrm{nl}}(t)-s_5^{\mathrm{nl}}(t)\\
s_5^{\mathrm{nl}}(t)-s_6^{\mathrm{nl}}(t)\\
s_6^{\mathrm{nl}}(t)
}, 
\]
which boils down to
\[
g(z(t)) =
\vect{
\gamma_1 z_1^3(t)-\gamma_2\big(z_2(t)-z_1(t)\big)^3\\
\gamma_2\big(z_2(t)-z_1(t)\big)^3-\gamma_3\big(z_3(t)-z_2(t)\big)^3\\
\gamma_3\big(z_3(t)-z_2(t)\big)^3-\gamma_4\big(z_4(t)-z_3(t)\big)^3\\
\gamma_4\big(z_4(t)-z_3(t)\big)^3-\gamma_5\big(z_5(t)-z_4(t)\big)^3\\
\gamma_5\big(z_5(t)-z_4(t)\big)^3-\gamma_6\big(z_6(t)-z_5(t)\big)^3\\
\gamma_6\big(z_6(t)-z_5(t)\big)^3
}.
\]

For $\gamma_i>0$ the corresponding spring exhibits hardening behavior, whereas for $\gamma_i<0$ it exhibits softening behavior. For the purpose of this work $\gamma_i = 50$. System \eqref{eq-modelznl} can be transformed into the discrete-time model \eqref{eq:statenl}--\eqref{eq:outputnl} using numerical integration methods, e.g., fourth-order Runge--Kutta.

\bibliographystyle{elsarticle-num}
\bibliography{bibl_gen}

\end{document}